\newtheorem{definition}{Definition}
\newtheorem{example}{Example}
\newtheorem{hypothesis}{Hypothesis}
\newtheorem{theorem}{Theorem}
  {\par\noindent\hspace*{0.5em}\begin{minipage}{\textwidth}\ignorespaces}%
  {\end{minipage}\par}
\newtheorem{proposition}{Proposition}
\newtheorem{remark}{Remark}
\theoremstyle{definition}
\newtheorem{proced}{Procedure}
\title{\bfseries Quasi-randomization tests for network interference: a random graph approach}
\author{Supriya Tiwari\footnote{Email: supriya\textunderscore tiwari@isb.edu \\ 
%The IMS 2024 ICSDS Student Travel Award partially supports Supriya Tiwari's research. Pallavi Basu's research is partially supported by the SERB MATRICS award MTR/2022/000073.
}
\hspace{0.3cm}and\hspace{0.15cm}Pallavi Basu\\
  \selectfont Indian School of Business, Hyderabad 500111, India}
\date{\today}
\begin{document}

\maketitle

%\begin{abstract}
\abstract{ \noindent

  Network interference occurs when the treatment status of one unit affects the potential outcomes of other units, giving rise to spillover effects that are difficult to test for. We propose treating the network as a random variable rather than a fixed quantity to address this challenge. This overcomes a key challenge of non-imputability of potential outcomes under the null and avoids the computational intractability of existing conditional randomization tests. Our quasi-randomization test builds the null distribution of no spillover effects using random graph null models, is exactly valid in finite samples under mild assumptions on the network-generating process, and offers substantially improved power over existing methods, particularly in cluster-randomized trials. We validate our approach via simulation and illustrate it by testing for interference in a weather insurance adoption experiment in rural China.
%\vspace{1in}

\begin{quote}
%\textit{Keywords:} Causal Inference, Network Interference, Peer Effects, Randomization Tests, Random Graphs
\textit{Keywords:} Causal Inference, Network Interference, Spillover Effects, Randomization Tests, Permutation Tests, Random Graphs
\end{quote}

%\thispagestyle{empty}

%\clearpage
%\onehalfspacing

%\spacingset{1.8}
  
  %\noindent\textbf{Keywords:} Causal Inference, Interference, Networks, Peer Effects, Quasi-Randomization Tests, Random Graphs
%\end{abstract}

\section{Introduction}

A common assumption in causal inference is the Stable Unit Treatment Value Assumption (SUTVA), which requires that the treatment given to a unit does not affect the potential outcomes of other units in the population \citep{cox1958planning}. While this allows for unbiased estimation of treatment effects and valid inferential strategies, many experimental settings involve \textit{interference} between units through network ties. In such settings, the treatment status of one unit can affect the outcomes of neighboring units through the network, giving rise to \textit{spillover effects}.\footnote{For example, experiments on social media networks such as LinkedIn or online marketplaces such as eBay observe interference owing to the interconnected nature of these platforms \citep{pouget2019testing, blake2014marketplace}.}
%Consider a population-level policy evaluation program such as an insurance product campaign \citep{cai2015social}. Due to social network ties within the population, information dissemination is a plausible interference mechanism that can affect the evaluated estimates. 
The total treatment effect can in these cases be decomposed into a direct effect and a spillover effect \citep{sobel2006randomized}. Throughout this paper, we restrict attention to interference operating through a network structure, and we assume the network is formed prior to and independently of treatment assignment; in particular, treatment does not alter link formation or interaction patterns among units.

\cite{savje2021average} shows that treatment effect estimators are robust to moderate interference, though inference is not accurate. The classical method for testing treatment effects is the Fisher Randomization Test (FRT; \citep{fisher1936design}, \citep{zhang2023randomization}), which tests a sharp null hypothesis of no treatment effect by exploiting the counterfactual imputability under the null. Testing under the null of no spillover effects creates imputability issues, as counterfactuals where direct treatment status is varied cannot be imputed under the null. This is referred to as a \textit{non-sharp} null, and determining the randomization distribution of the test statistic becomes challenging. A growing literature addresses this by restricting attention to a chosen subset of units, following \cite{aronow2012general}, known as \textit{focal} units. These procedures fix the treatment assignment for the focal units, creating a sharp null for these units. This yields a conditional randomization test with valid exact p-values \citep{athey2018exact, basse2019randomization, puelz2022graph}. \cite{athey2018exact} formulates this approach to a given network and evaluates a range of non-sharp null hypotheses. \cite{basse2019randomization} provides a framework for general forms of interference. \cite{puelz2022graph} presents a graph-theoretic approach to constructing the conditioning mechanism. %However, these methods lose power in cluster-randomized experiments, where restricting to focal units can render the null distribution degenerate.

%\PB{ ADD: "Independently and concurrently, Zhong (2024) develops a related partial sharp null framework. Our approach differs in that..."}

%This paper assumes the network is drawn from a distribution. This idea contrasts with the setup considered in current methods, which assume the network is fixed. We draw a parallel here to estimating the probability of unit treatment assignment using propensity scores in observational studies. The propensity for two nodes in a network to be connected can be considered a function of their network characteristics, covariates, or both. This leads us to think of these linkages as coming from a generating process with an associated probability distribution. The current methods are based on strict randomization inference with no distributional assumptions for either the network or the potential outcomes, thereby generalizing the Fisher Randomization Test. This allows the proposed methods to be widely valid, albeit at the cost of power. This paper explores how leveraging appropriate stochastic assumptions can significantly boost power and lead to methodological advancements.

In this paper, we depart from this literature by treating the network itself as a random variable, rather than a fixed quantity. This opens up a new path for interference testing. The idea draws a parallel to propensity score methods in observational studies: just as the probability of treatment assignment can be modeled as a function of covariates, the propensity for two nodes to be connected can be modeled as a function of network characteristics. %\ST{Current methods that take the network as fixed achieve validity in finite samples. Our work demonstrates how leveraging information from the network-generating process can boost the power of methods for testing spillover effects. For instance, consider cluster-randomized trials. The network-formation process is a precursor to cluster formation in the population, and a realistic model of this process can be beneficial for boosting the power of current methods. 
%Thus, we see a trade-off between potential model-misspecification bias arising from the assumption and substantial power gains for state-of-the-art methods, leading to better spillover effect detection. 
%Since current methods exhibit low power, as shown in numerical studies, a realistic modeling of the network-generating process is a feasible approach in many practical settings. The general idea opens up multiple possibilities for testing interference at the cost of assumptions about the network-generating process, which is applicable beyond cluster-randomized design settings.}
Current methods that take the network as fixed are valid in finite samples but can have low power, as our numerical studies show. We demonstrate that leveraging information from the network-generating process yields substantial power gains. Consider cluster-randomized trials as an example: because the network-formation process is a precursor to cluster formation, a realistic model of that process is especially informative in such trials. More broadly, the gains apply wherever the exposures are strongly dependent, not only under cluster randomization. These gains come at the cost of an assumption on the network-generating process, a trade-off between potential model-misspecification bias and improved power that we argue is reasonable in many practical settings.

We note that this shift from fixed-network to random-network inference changes the inferential target. Standard design-based inference provides statements conditional on the realized network. Our approach instead averages over hypothetical networks generated by the same process as the observed network. This is appropriate in settings where the observed network is itself one realization of an ongoing social process, such as evolving friendship ties or communication patterns, rather than a fixed structural feature of the population. We discuss this interpretation explicitly in Section \ref{Setup}.

%We assume that a given vector of sufficient statistics characterizes the network data-generating process and can thus be represented as an exponential random graph model. We first treat the graph's degree sequence as a sufficient statistic. Proposed in \cite{newman2001random}, random graphs with arbitrary degree distribution represent real-world social networks closely owing to their ability to capture heterogeneity in degree. This helps us build a quasi-randomization test based on the network distribution by randomizing over graphs generated by this process. We show that, conditional on the observed treatment, the (quasi-)randomization distribution of a test statistic yields valid p-values. We then generalize the model to a vector of sufficient statistics and present a method that works for any set of such network statistics.

We assume the network data-generating process is characterized by a vector of sufficient statistics and can thus be represented as an exponential random graph model. We first treat the degree sequence as a sufficient statistic, following \cite{newman2001random}, where random graphs with arbitrary degree distributions capture heterogeneity in degree and closely represent real-world social networks. This enables us to build a quasi-randomization test by randomizing over graphs generated by this process, conditional on the observed treatment assignment. We then generalize to a vector of permutation-invariant sufficient statistics and present a method for a general exponential random graph model.

%Model-based approaches with distributional assumptions on the outcome model function have been previously considered in the literature for identifying and estimating peer effects (e.g., \cite{manski1993identification, manski2013identification, bowers2013reasoning, toulis2013estimation, blume2015linear}). More recently, \cite{li2022random} and \cite{leung2020treatment} proposed estimation strategies for spillover effects under a distributional assumption on the network formation model. \PB{Add more on Li \& Wager. One argument is that, in many settings, the observed network is itself a realization of an ongoing social process, so averaging over network realizations is more appropriate than conditioning on a single snapshot.} The inferential methods for the proposed estimation procedures involve non-standard or asymptotic approximations (\cite{imbens2004confidence}). With a careful identification of causal estimates, such large sample normal approximations of confidence intervals conform to a \textit{Neymanian} perspective. 

Model-based approaches with distributional assumptions about the outcome model have previously been considered for identifying and estimating peer effects \citep{manski1993identification, manski2013identification, bowers2013reasoning, toulis2013estimation, blume2015linear}. More recently, \cite{li2022random} and \cite{leung2020treatment} proposed estimation strategies for spillover effects under distributional assumptions on the network formation model, drawing on random graph asymptotics. Our work complements these estimation approaches by providing a finite-sample valid testing procedure within the random graph framework. Unlike the asymptotic confidence intervals of \cite{li2022random}, our method delivers exact p-values under mild assumptions on the network-generating process. Subsequent to our work, \cite{zhong2024unconditional} develops a related partial-sharp-null framework. Unlike \cite{zhong2024unconditional}, who randomizes over treatment assignments unconditionally, our procedure randomizes over network realizations while holding treatment fixed. Both \cite{zhong2024unconditional} and this paper test the null hypothesis across all units. While \cite{zhong2024unconditional} remains distribution-free, we propose the use of random-graph null models to boost the power of the methods. This is particularly evident in cluster-randomized trials, where the treatment assignment space is too restricted to support a non-degenerate null distribution. \cite{zhong2024unconditional} does not study the size of the imputable sets in this setting.

Our contributions are threefold. First, we embed the random graph interference model within the potential outcomes framework. We define sharp and non-sharp null hypotheses for random graph models and develop a randomization-based test that achieves finite-sample validity in line with Fisherian procedures.

Second, we address technical challenges in current methods. Randomization tests based on network-generating processes yield imputable test statistics under the null, conditional on treatment assignment. This overcomes a key limitation of existing approaches, which restrict attention to focal units and test the no-spillover null only for that subset. Our method tests the stricter null of no spillover effects across all units. Crucially, by randomizing over the network distribution rather than the treatment assignment, our method avoids the degeneracy that persists in existing approaches in cluster-randomized trials, where the treatment assignment space is too restricted to support valid inference.

Third, we establish the computational efficiency of the proposed tests. The permutation test based on degree sequence preserves the degree sequence of the observed graph and admits fast polynomial-time algorithms for large graphs \citep{viger2005efficient, fosdick2018configuring}. The generalization to exponential random graph models reduces to permuting within graph isomorphism classes, connecting to a broader literature on random graph null models in network analysis \citep{milenkovic2009optimized, sah2014exploring}.

%We are the first to formally introduce random graph null models as an extension of the Fisher Randomization Test. 
%To illustrate our method, we reanalyze a randomized experiment in rural China in which rice crop farmers are informed about a weather insurance product (cited in \cite{cai2015social}). The hypothesis is that farmers with a higher proportion of informed neighbors exhibit a higher adoption rate. We display our methodology on a subset of the original experiment and obtain significant spillover of information among the farmers. %The setting shows the importance of such spillover effects in evidence-based policymaking under resource-constrained settings.

To illustrate our method, we reanalyze a randomized experiment in rural China in which rice crop farmers are informed about a weather insurance product \citep{cai2015social}. The hypothesis is that farmers with a higher proportion of informed neighbors exhibit a higher adoption rate. We apply our methodology to a subset of the original experiment and find significant evidence of information spillover among farmers (p-value = 0.0126).

%The remaining paper is structured as follows. In Section \ref{Setup}, we set up notation and formalize the null hypothesis of interest, incorporating the network random variable. We also describe the state-of-the-art methodology of using a conditioning mechanism to test a non-sharp null. Section \ref{quasi-rand-test} presents a randomization test based on a given network-generating process and its application to different experimental designs. Section \ref{quasi permutation test} generalizes this to exponential random graph models, introducing the permutation test. In Section \ref{simulation study}, we present extensive numerical simulations to evaluate the performance of the proposed test. Sections \ref{illustration} and \ref{conclusion} present an empirical application illustrating our methodology and conclude the paper. Proofs, algorithms, and additional numerical results are presented in the Appendix.

The remaining paper is structured as follows. In Section \ref{Setup}, we set up notation and formalize the null hypothesis of interest, incorporating the network random variable. We also describe the state-of-the-art methodology of using a conditioning mechanism to test a non-sharp null. Sections \ref{quasi-rand-test} and \ref{Experimental design} present a randomization test based on a given network-generating process and discuss implementation under different experimental designs. Sections \ref{quasi permutation test} and \ref{Permutation test} generalize this to exponential random graph models, introducing the permutation test. In Section \ref{simulation study}, we present extensive numerical simulations to evaluate the performance of the proposed test. Sections \ref{illustration} and \ref{conclusion} present an empirical application illustrating our methodology and conclude the paper. Proofs, algorithms, and additional numerical results are presented in the Appendix.

\section{Setup}
\label{Setup}
\subsection{Notation}

Consider a population $\mathbb{P} = \{1,2,..., N\}$ where $N$ is the total number of units in the population. Consider a vector $Z \in {\mathbb{Z}}^N$ where each element in the vector indicates treatment assigned to the corresponding indexed unit in the population. Here, $\mathbb{Z}$ represents the set of possible treatments to which the population units can be exposed. We take $\mathbb{Z}=\{0,1\}$ for this article. Hence, any treatment assignment vector is an N-dimensional binary vector. We consider the population $\mathbb{P}$ to be connected via an undirected graph $G = (\mathbb{P},E)$ where $E \subseteq \mathbb{P}\times\mathbb{P}$. For distinct units $i$ and $j$, $(i,j) \in E$ represents a connection between them. With some slight abuse of notation, we also represent the graph's adjacency matrix by $G$. Thus, $G$ is an $N \times N$ binary matrix. Here, for some $i$ and $j$, $G_{ii} = 0$ and $G_{ij} = G_{ji}$ owing to $G$ being an undirected graph with no self-loops. If $G_{ij} = 1$, then we say units $i$ and $j$ are neighbours. This can also be denoted by $i \in \mathcal{N}_{G}(j)$ or vice-versa. Here, $\mathcal{N}_{G}(i)$ represents all the neighbors of unit $i$ in graph $G$. We may write $\mathcal{N}(i)$ if the graph is clear in context. We define the total number of unit neighbors of $i$ as $ i$'s degree. This is denoted by $deg(i)$ and equals $|\mathcal{N}(i)|$. We denote the distance between two units $i$ and $j$ as $dist(i,j)$. It is defined as the shortest path length between the two units. By convention, we take $dist(i, i) = 0$ and $dist(i,j) = \infty$ if no path exists between the two distinct units. Consider $\mathbb{P'}\subseteq \mathbb{P}$. We denote $G[\mathbb{P'}]$ as the sub-graph induced by $\mathbb{P'}$. That is, $G[\mathbb{P'}]=(\mathbb{P'},E')$ where $E' = \{(i,j)\in\mathbb{P'}\times\mathbb{P'}|\:(i,j)\in E\}$.

Each unit $i$ in the population is `characterized' by a set of pre-treatment or covariate vectors given by $X_i$, where $X_i$ is an M-dimensional vector. $X_i$ is fixed and influences the outcome via the network's potential outcome function or formation. Let $X$ denote the $  N\times M$ matrix where each row corresponds to the respective unit's covariate vector. Consider the treatment vector to be drawn from a probability distribution $P_{X}(Z, G): \{0,1\}^N\times \{0,1\}^{N\times N} \rightarrow [0,1]$. We define the conditional probability distribution $P(Z|G): \{0,1\}^N \rightarrow [0,1] \subset \mathbb{R}$ as $P_{Z|G}$. Let $Z_{obs}$ be the realized treatment drawn from the treatment assignment distribution $P_{Z|G}$. Consider the graph $G$ to be coming from the marginal probability distribution $P(G): \{0,1\}^{N \times N} \rightarrow [0,1] \subset \mathbb{R}$, denoted as $P_{G}$. We denote the observed sample from the distribution $P_G$ to be $G_{obs}$. Throughout, we assume the network G is formed prior to and independently of treatment assignment Z; in particular, treatment does not alter link formation or interaction patterns among units.
We define the potential outcome function $Y:\{0,1\}^N \times \{0,1\}^{N \times N} \rightarrow \mathbb{R}^N$ as a real-valued function that takes treatment assignment and network adjacency matrix as arguments. We will consider $Y$ to be fixed, and each element $i$ of an image in $Y(Z, G)$, $Y_i(Z, G)$, denotes the outcome of the $i^{th}$ unit in the population under the treatment assignment $Z$ and network assignment $G$. We denote the corresponding potential outcome as $Y_{obs}$ that is, $Y_{obs} = Y(Z_{obs}, G_{obs})$. We note that the potential outcome function implicitly contains information in $X$ and can be written as $Y_{X}(Z, G)$. For the remainder of the article, we will use the notation $Y(Z, G)$ for readability unless deemed necessary. Our framework accounts for interference through the network structure and the treatment assignments of neighboring units. Treatment propagation and outcome-mediated interference are outside the scope of the current framework.

\subsection{Definitions}
%write which assumption is SUTVA
The assignment mechanism defines the joint probability distribution of treatment assignment vectors given covariates and potential outcomes, denoted by $P(Z|X, Y(0), Y(1))$. The following are three key assumptions that are considered in the causal inference literature for assignment mechanism: (i) individualistic assignment, which restricts the treatment assignment of a unit to its covariates and outcomes (ii) probabilistic assignment, which assumes non-degenerate probabilities of treatment assignment to units (iii)  unconfoundedness assumption, which restricts the dependence of the probability assignment of units to only covariates, and not the potential outcome (\citet{imbens2015causal}). 
% We define each of them below:
% \begin{definition}
% \label{def 1}
%     An assignment mechanism is individualistic if 
%     \begin{equation*}
%         P(Z|X,Y(0),Y(1)) = c\cdot \prod_{i=1}^{N}{p(X_i,Y_i(0),Y_i(1))}^{Z_i}\cdot {(1-p(X_i,Y_i(0),Y_i(1)))}^{(1-Z_i)}
%     \end{equation*}
%     for some function $p(.)$.
% \end{definition}
% \begin{definition}
% \label{def 2}
%     An assignment mechanism is probabilistic if 
%     \begin{equation*}
%         0<P(Z|X,Y(0),Y(1),Z_i =1)<1 \qquad \forall i  \in [N]
%     \end{equation*}
% \end{definition}
% \begin{definition}
% \label{def 3}
%     An assignment mechanism is unconfounded if 
%     \begin{equation*}
%         P(Z|X,Y(0),Y(1)) = P(Z|X)
%     \end{equation*}
% \end{definition}
Throughout this paper, we operate in a design-based experimental framework where the treatment assignment mechanism is known and controlled by the experimenter. Thus, the assignment mechanism is unconfounded. Given a well-defined potential outcome function, the individualistic assignment mechanism assumption constitutes \textit{Stable Unit Treatment Value Assumption} (SUTVA). We argue that the underlying network structure among the population is an implicit form of treatment. Hence, we generalize the treatment to a bivariate treatment assignment vector $(Z, G)$. 
%While the usual assumptions described above are plausible for the distribution of direct treatment $Z$, the individualistic assumption may not be valid for the distribution of network treatment $G$, given the nature of this setup's joint treatment assignment vector $(Z, G)$ distribution in this setup. 
%\begin{equation*}
  %  P(Z,G|X) = P(G|Z,X)\cdot P(Z|X),
%\end{equation*}
We assume that the joint treatment assignment is unconfounded; that is, 
\begin{equation}
\label{unconfoundeness}
    P(Z,G|X,Y(0),Y(1)) =   P(Z,G|X). 
\end{equation}
%We remark that we considered the assumption of the unconfoundedness of both treatment assignments. 
\begin{remark}
\label{remark no spillover}
    It should be noted that the graph $G$ \textit{only} maps the interference structure in the population. Any other dependence of the potential outcome function on network characteristics and other confounding factors is captured by the study's covariates. We denote the covariates corresponding to network statistics as $G_X$, which may be dependent on G. Apart from $G_X$, in this article, we restrict the potential outcome function to be covariate-free for the sake of brevity. Thus, Equation \ref{unconfoundeness} can be written as 
    \begin{equation*}
    \begin{aligned}
        P(Z,G|X,Y(0),Y(1)) &=   P(Z,G|G_{X},Y(0),Y(1)),\\
        &= P(Z,G|G_{X}).
    \end{aligned}
    \end{equation*}
    For further details on covariate balancing in randomized experiments, refer to \citet{liu2020regression}.
\end{remark}
We formally define the standard sharp null hypothesis of no (direct) treatment effect when the population is not interconnected.
\begin{hypothesis}
    The sharp null hypothesis of no treatment effect: 
    \begin{equation*}
        \forall i \in [N] \quad Y_i(Z) = Y_i(Z') \qquad \forall \quad Z, Z' \in \{0,1\}^N.
    \end{equation*}
\end{hypothesis}
The null hypothesis stated above imposes specific restrictions on the potential outcome function. Extending this to our setup, we get the following definition of the sharp null hypothesis.  
\begin{hypothesis}
    The sharp null hypothesis of no (joint) treatment effect: 
    \begin{equation*}
       \forall i \in [N]\quad Y_i(Z,G) = Y_i(Z',G') \qquad \forall \quad(Z,G), (Z',G') \in \{0,1\}^N\times\{0,1\}^{N\times N} .
    \end{equation*}
\end{hypothesis}
We now define a test statistic that will be used for testing such null hypotheses. Following our setup, we will use the joint treatment assignment, $(Z, G)$, for the rest of the article.  
\begin{definition}
    A test statistic $T$ is a real-valued function of the treatment assignment and $Y_{obs}$ given the covariates $X$, $T_{X}(Z, G, Y_{obs})$. 
\end{definition}
Under the sharp null, given an observed outcome, $Y_{obs}$, we can impute the outcome function $Y$ for all treatment assignments. This is not possible when the null hypothesis is non-sharp. We formally define the non-sharp null hypothesis below. 
\begin{hypothesis}
\label{hyp 3}
    Non-sharp null hypothesis: 
    \begin{equation*}
        \forall i \in [N]\quad Y_i(Z,G) = Y_i(Z',G') \qquad \forall \quad(Z,G), (Z',G') \in \textbf{Z}_i\times\textbf{G}_i \subset\{0,1\}^N\times\{0,1\}^{N\times N}.
    \end{equation*}
\end{hypothesis}
Here, the subset restriction, denoted by $\textbf{Z}\times\textbf{G}:= \textbf{Z}_i\times\textbf{G}_i$, on treatment assignment is chosen by the null hypothesis of interest. We elaborate further using the testing null of no spillover effects, as shown in the example below. Consider the null hypothesis of no spillover effects if the distance between two units is at least k. We assume this setting to be graph-covariate free. 
\\
\begin{example}
\label{ex 1}
    (Null Hypothesis:~no spillover effects at distance k and beyond)
    Let the $k-$distance function $D^{k}_{G}(i) \in \{0,1\}^{N}$ for a unit $i$ in graph $G$ be defined as below:
    \begin{equation*}
        \big[D^{k}_{G}(i)\big]_j = 
        \begin{cases}
        1 & \text{if dist(i,j)}\leq k,\\
        0 & \text{otherwise}.
        \end{cases}
    \end{equation*}
    \begin{equation*}
    \begin{aligned}
         \forall i \in [N]\quad Y_i(Z,G) = Y_i(Z',G') \qquad &\forall \quad(Z,G), (Z',G') \in \{0,1\}^N\times\{0,1\}^{N\times N} \quad, \\
        &\text{s.t.}\quad Z\cdot D_{G}^{(k-1)}(i) = Z'\cdot D_{G'}^{(k-1)}(i).
    \end{aligned}
    \end{equation*}    
\end{example}
 Note the non-imputability of the potential outcome function under the null considered above for different treatment assignments. Precisely, given $(Z_{obs}, G_{obs})$, we can compute potential outcomes for treatment assignments in the restriction set defined above. The null is not sharp since the restriction set is a strict subset of the treatment assignment space. 
 
\subsection{Null hypothesis of no spillover and conditional randomization test}
In this paper, we focus on testing for first-order spillover effects. We take $k=1$ in Example \ref{ex 1} and state the null hypothesis of no spillover effects below.
\begin{hypothesis}
\label{hyp 4}

\begin{mdframed}
%[leftmargin=0.1in,rightmargin=0.1in, innertopmargin=1em]

    (Null hypothesis of no spillover effects)
    \begin{equation*}
        \begin{aligned}
         \forall i \in [N]\quad {Y_i}_{G_{X_i}}(Z, G) = {Y_i}_{G'_{X_i}}(Z', G') \quad &\forall (Z, G), (Z', G') \in \{0,1\}^N\times\{0,1\}^{N\times N} \quad, \\
        &\text{s.t.}\quad Z_i = Z_i', {G_X}_i={G'_X}_i.
    \end{aligned}
    \end{equation*}
\end{mdframed}
\end{hypothesis}

The treatment $G$ is unrestricted here, with the graph covariates fixed. %\PB{Under the null of no spillover effects, outcomes are invariant to the network structure, so G is free to vary. This enables randomization over the network-generating process in our testing procedure. We note that this null is stronger than the standard fixed-network formulation, which restricts $G = G'$ and tests only for invariance to neighbors' treatment assignments. The additional strength is appropriate in our setting: since the network is itself a random variable, the natural null is that outcomes are unaffected by both neighbors' treatment assignments and the realized network structure.} \ST{Under the null of no spillover effects, outcomes are invariant to the network structure corresponding to the spillover effect, so $G$ is free to vary. Conditioning on the direct treatment assignment vector fixes the direct effect, which is not dependent on $G$. Since graph covariates can induce heterogeneity in the potential outcome function, we also condition on them so that the null hypothesis captures only spillover effects. This framework enables randomization over the network-generating process in our testing procedure. Since the network is itself a random variable, the natural null is that effective treatment (known as \textit{exposure mappings} in the literature) is unaffected by realized network structure.} We claim that the null hypothesis, in the above, encodes no spillover effects (see Remark \ref{remark no spillover}). We now describe the spillover mechanism as a component in the potential outcome function $Y(Z, G)$. In the following example, $Y(Z, G)$ is linearly separable into two components: one with a direct effect and the other with a spillover effect, as stated below.
Under the null of no spillover effects, outcomes are invariant to the network structure corresponding to the spillover effect, so G is free to vary. Conditioning on the direct treatment assignment vector fixes the direct effect, which is independent of G. Since graph covariates can induce heterogeneity in the potential outcome function, we also condition on them to capture only spillover effects. This framework enables randomization over the network-generating process in our testing procedure. Since the network is itself a random variable, the natural null is that outcomes are unaffected by both neighbors' treatment assignments and the realized network structure (equivalently, the effective treatment, known as exposure mappings in the literature, is unaffected by the realized network structure).

We restate the null hypothesis of no spillover effects, as described in Hypothesis \ref{hyp 4}.
\begin{example}
\label{hyp 5}
(Null hypothesis of no additive spillover effects)
    \begin{equation*}
    \begin{aligned}
        Y_{i}(Z,G) &= Y^1(Z_i) + Y^2(Z\cdot G_i),\\
                   &= Y^1(Z_i) + Y^2(Z_{\mathcal{N}_{G}(i)}).
    \end{aligned}       
    \end{equation*}
    Using the above assumption of a linearly separable joint treatment effect, we get
    \begin{equation*}
        \begin{aligned}
        \forall i \in [N]\quad Y^2(Z_{\mathcal{N}_{G}(i)}) = Y^2(Z'_{\mathcal{N}_{G'}(i)}) \qquad &\forall \quad(Z,G), (Z',G') \in \{0,1\}^N\times\{0,1\}^{N\times N} \quad, \\
        &\text{s.t.}\quad Z_i = Z_i', {G_X}_i={G'_X}_i.
    \end{aligned}
    \end{equation*}
\end{example}
 This approach to defining non-sharp nulls of spillovers in the literature uses \textit{exposure functions} (\citet{aronow_estimating_2017}). The exposure functions are functions of the treatment assignment vector, $Z$, such that restriction sets can be created on $Z$ by partitioning $Z$ on the range of the exposure function. We can define an exposure function in the above Example \ref{ex 1}: 
\begin{equation*}
    f_{G, i}(Z) = \frac{Z\cdot D_{G}^{(k-1)}(i)}{|D_{G}^{(k-1)}(i)|}.
\end{equation*}

We obtain the setup described in the literature for a fixed $G$, using the above-described exposure function as the restriction set. The setup assumes a spillover mechanism via an exposure function. Since we consider the graph underlying the population to be random, we define restrictions on the joint treatment assignment in Hypothesis \ref{hyp 3}. A spillover mechanism, if it exists, will be captured in the potential outcome function. We describe the currently proposed method in the literature of a conditional-randomization test procedure to obtain valid p-values using `focal units' (\citet{puelz2022graph}).
\begin{proced}
\label{procedure 0}
Consider $F$ as a subset of units and treatment assignment $Z$ such that a test statistic $T(Z|F)$ is imputable under the null.
\begin{enumerate}
    \item Draw $Z_{obs} \sim P(Z)$ and obtain $Y_{obs}$.
    \item Draw $F \sim P(F|Z_{obs})$ and compute $T(Z_{obs}|F)$.
    \item Compute p-value as $P_{Z|F}\left(T(Z|F) > T(Z_{obs}|F)\right)$.
\end{enumerate}
\end{proced}
The above-stated procedure holds finite-sample validity. The conditioning mechanism $F$, termed as focal units, is drawn from $P(F|Z_{obs})$, which the analyst should determine. We consider the random selection method as the conditioning mechanism to obtain the focal units, as proposed in \cite{athey2018exact}. Procedure \ref{procedure 0} tests for deviation from the null hypothesis of no spillover effect on focal units. This ensures the test statistic $T(Z|F)$ is imputable. We require the test statistic to be imputable to obtain its sampling distribution under the null, even though we observe only one treatment assignment vector. Below, we define the imputability of a test statistic for the joint treatment assignment. 

\begin{definition}
    Consider a test statistic $T(Z, G, Y_{obs})$. A test statistic is called imputable under $\mathcal{H}_0$ if 
    \begin{equation*}
    \begin{aligned}    
        T(Z,G,Y(Z,G)) &= T(Z,G,Y(Z',G'))\\
        &\quad P(Z,G|(Z,G)\in \textbf{Z}\times\textbf{G}\text{ },\mathcal{H}_0),P(Z',G'|(Z',G')\in \textbf{Z}\times\textbf{G}\text{ }, \mathcal{H}_0) >0.
    \end{aligned}
    \end{equation*}
\end{definition}
The choice of an imputable test statistic is not unique, though, and different test statistics can yield different test powers. This is determined by the test statistic's responsiveness to the null hypothesis.

\section{Quasi-randomization test based on network}
\label{quasi-rand-test}

We present the quasi-randomization test for the sharp null of no spillover effects. Here ``sharp" refers to imputability under randomization of G, not to the classical fixed-network sense. Our main idea is to generate the null sampling distribution of a test statistic by randomizing over the network-generating process. Intuitively, the underlying network structure also contains information about spillover effects. This can also be seen in Hypothesis \ref{hyp 4}, where the network variable $G$ is unrestricted under the otherwise non-sharp null, keeping the network covariates fixed. Fixing direct treatment assignment and network covariates leads to the imputability of a test statistic of choice. To strictly generalize the conditional randomization test over focal units (\cite{athey2018exact}), one can randomize over the network distribution of $G_{F^{c}}$, that is, over the non-focal units in the population. One feature of this generalization is that it ensures the network characteristics of the focal units remain fixed. One can, then, also randomize over the joint treatment assignment vector $(Z, G)$ for the non-focal units. This extends state-of-the-art methods to the bivariate treatment assignment vector. We now present our method that generalizes this and randomizes over all the units in the population. The following method focuses on the conditional treatment assignment distribution of $G|Z$. We formally define the conditional test statistic, $T_c$, below and show its imputability under the null Hypothesis \ref{hyp 4}.
\begin{proposition}
\label{Prop impute}
    Define $T_c(G|Z_{obs}) := T(Z_{obs},G,Y_{{G_X}_{obs}}(Z_{obs},G))$. $T_c(G|Z_{obs})$ is imputable under the null hypothesis of no spillover effects in Hypothesis \ref{hyp 4} and is equal to $T(Z_{obs}, G, Y_{obs})$. Here, $Y_{obs}=Y_{{G_X}_{obs}}(Z_{obs},G_{obs})$.
\end{proposition}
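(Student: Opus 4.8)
The plan is to reduce the proposition to a single observation: under Hypothesis \ref{hyp 4}, the map $G \mapsto Y_{{G_X}_{obs}}(Z_{obs},G)$ is constant and equal to $Y_{obs}$. Granting this, both halves of the claim are immediate — the identity $T_c(G|Z_{obs}) = T(Z_{obs},G,Y_{obs})$ drops straight out of the definition $T_c(G|Z_{obs}) := T(Z_{obs},G,Y_{{G_X}_{obs}}(Z_{obs},G))$, and imputability holds because $T_c$ then depends only on the observed data $(Z_{obs},Y_{obs})$ together with its free argument $G$, never on an unobserved potential outcome.

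First I would prove the observation. Fix a unit $i \in [N]$ and an arbitrary graph $G$, and apply Hypothesis \ref{hyp 4} to the pair $(Z,G) = (Z_{obs},G)$ and $(Z',G') = (Z_{obs},G_{obs})$, with the network-covariate argument held at ${G_X}_{obs}$ in both. The side conditions required by the hypothesis are met: $Z_i = Z_i'$ because both equal $(Z_{obs})_i$, and ${G_X}_i = {G'_X}_i$ because both are pinned to $({G_X}_{obs})_i$ by construction. Hypothesis \ref{hyp 4} then gives ${Y_i}_{{G_X}_{obs}}(Z_{obs},G) = {Y_i}_{{G_X}_{obs}}(Z_{obs},G_{obs})$, and the right-hand side is $(Y_{obs})_i$ since $Y_{obs} = Y_{{G_X}_{obs}}(Z_{obs},G_{obs})$. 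As $i$ was arbitrary, the vectors agree: $Y_{{G_X}_{obs}}(Z_{obs},G) = Y_{obs}$.

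Next I would cash this out against the two assertions. Substituting the last identity into the definition of $T_c$ yields $T_c(G|Z_{obs}) = T(Z_{obs},G,Y_{obs})$. For imputability, the restriction set here is $\textbf{Z}\times\textbf{G}$ with $\textbf{Z} = \{Z_{obs}\}$ (we condition on the realized treatment) and $\textbf{G}$ the set of graphs over which we randomize; for any $G,G' \in \textbf{G}$ carrying positive probability under $\mathcal{H}_0$, the observation gives $Y(Z_{obs},G) = Y(Z_{obs},G') = Y_{obs}$, so $T(Z_{obs},G,Y(Z_{obs},G)) = T(Z_{obs},G,Y(Z_{obs},G'))$, which is exactly the defining condition of an imputable test statistic. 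Equivalently, whichever null-consistent potential-outcome table one plugs into $T_c$, the value is unchanged, so $T_c$ can be computed from $(Z_{obs},G,Y_{obs})$ alone.

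I do not anticipate a substantive obstacle: the whole content is in lining up the quantifiers of Hypothesis \ref{hyp 4} with the expressions defining $T_c$. The one place to be careful is checking that the subscript ${G_X}_{obs}$ in $T_c$ genuinely holds every unit's network covariate at its observed value while $G$ varies — this is what makes the side condition ${G_X}_i = {G'_X}_i$ of Hypothesis \ref{hyp 4} available — and that $G_{obs}$ itself lies in the set over which we randomize. I would also state at the outset that ``imputable'' is meant in the sense that $T$ is invariant to replacing the potential-outcome table by any other table consistent with $\mathcal{H}_0$, since that is the property actually used when the quasi-randomization test builds the null distribution of $T_c$ by resampling $G$.
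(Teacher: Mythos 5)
Your proposal is correct and follows essentially the same route as the paper's proof: both apply Hypothesis \ref{hyp 4} with $Z_i$ and ${G_X}_i$ held at their observed values to conclude that $Y_{{G_X}_{obs}}(Z_{obs},\cdot)$ is invariant to the graph argument, then specialize to $G'=G_{obs}$ to identify the value with $Y_{obs}$ and read off imputability. Your version merely reverses the order (establishing $Y_{{G_X}_{obs}}(Z_{obs},G)=Y_{obs}$ first, then deducing imputability) and spells out the unit-by-unit quantifier check, which is a fine, slightly more explicit presentation of the same argument.
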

\begin{proof}
    The proof is given in Appendix \ref{proof impute}
\end{proof}
% Our main methodological contribution of the paper is the quasi-randomization test for the sharp null of spillover effects as described in Hypothesis \ref{hyp 4}. 
We state our proposed quasi-randomization testing procedure below. The procedure assumes the knowledge of the network-generating process $P(G|G_{X})$. An elaborate consideration of the network distribution will be presented in Section \ref{quasi permutation test}. Here, we present Procedure \ref{procedure 1} as the oracle procedure, where $P(G)$ is known. We later present Procedure \ref{procedure 2}, which relaxes this by conditioning on the degree sequence. Procedure \ref{procedure 3} further generalizes to ERGMs for cluster-randomized settings.
\begin{proced}
\label{procedure 1}
 Let $(Z_{obs}, G_{obs}, Y_{obs})$ be the observed joint treatment assignment and its corresponding outcome. Consider a test statistic $T(Z, G, Y_{obs})$. 

\begin{enumerate}
    \item Define a new test statistic $T_c(G|Z_{obs}) := T(Z_{obs},G,Y_{obs})$.
    \item Impute the observed test statistic value ${T_c}_{obs} = T_c(G_{obs}|Z_{obs})$.
    \item Consider the randomization distribution $P(G|Z,G_{X})$.
    \item Compute $pval(Z_{obs},G_{obs},Y_{obs}):= \mathbb{E}_{G|Z_{obs}}[\mathcal{I}(T_c(G|Z_{obs})>{T_c}_{obs})|Z_{obs}]$.\footnote{When $T_c$ has a discrete distribution, ties occur with positive probability and the p-value is defined to include the observed graph, which guarantees finite-sample validity without a continuity assumption. In the empirical (sampling-based) versions of the test, this corresponds to the convention $pval = (1 + b)/(1 + B)$, where $B$ is the number of sampled graphs and $b$ the number with $T_c(G \mid Z_{\mathrm{obs}}) > T_c(G_{\mathrm{obs}} \mid Z_{\mathrm{obs}})$.}
\end{enumerate}
\end{proced}
\begin{theorem}
\label{validity rand}
Consider the null hypothesis, $\mathcal{H}_0$, in Hypothesis \ref{hyp 4}. Let $(Z_{obs}, G_{obs}) \sim P(Z, G)$ be the bivariate treatment assignment of a randomized experiment. We assume that $P(Z, G)$ is known. Consider a test statistic $T(Z, G, Y(Z, G))$. Then, Procedure \ref{procedure 1} described above is conditionally valid at level $\alpha \in (0,1)$. That is,
\begin{equation*}
    \mathbb{E}\bigl[\mathcal{I}(\text{p-val}(Z_{obs},G_{obs},Y_{obs})\leq \alpha)|\mathcal{H}_0\bigl]\leq \alpha \qquad \forall \text{ }\alpha \in (0,1).
\end{equation*}
Here, the expectation is for the distribution of $P(G_{obs}|Z_{obs})$. 
\end{theorem}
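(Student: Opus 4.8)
The plan is to run the standard randomization-test validity argument: condition on the direct treatment and the graph covariates, use Proposition~\ref{Prop impute} to turn the imputed statistic into an honest deterministic function of the randomized graph, observe that the graph the experiment actually produced is (after this conditioning) just one more draw from the randomization law, and then invoke the elementary fact that the survival transform of a random variable is stochastically at least uniform.

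Concretely, I would first fix $z = Z_{obs}$ and $g_X = {G_X}_{obs}$ and argue conditionally on $\{Z_{obs}=z,\,{G_X}_{obs}=g_X\}$; since it is enough to bound the conditional rejection probability by $\alpha$ for (almost) every such pair and then integrate, the only remaining randomness is that of $G_{obs}$. Under $\mathcal{H}_0$ (Hypothesis~\ref{hyp 4}) the outcome depends on the graph only through $g_X$, so $Y_{obs}$ is a fixed vector once we condition, and therefore $\psi(g) := T_c(g\mid z) = T(z,g,Y_{obs})$ is a fixed real-valued function of $g$. Let $Q := P(G\mid Z=z,\,G_X=g_X)$ be the randomization law of Step~3 of Procedure~1; every $g$ in its support has the same graph covariates $g_X$, so Proposition~\ref{Prop impute} gives $\psi(g) = T\bigl(z,g,Y_{g_X}(z,g)\bigr)$, and in particular ${T_c}_{obs}=\psi(G_{obs})$ is a genuine evaluation of $\psi$ at the realized graph rather than a quantity contaminated by imputation error.

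Next, because $P(Z,G)$ is assumed known and the experiment yields $(Z_{obs},G_{obs})\sim P(Z,G)$, while $G_X$ is a network statistic (a function of $G$), the conditional law of $G_{obs}$ given $\{Z_{obs}=z,\,{G_X}_{obs}=g_X\}$ is exactly $Q$. Hence $W := \psi(G_{obs})$ has the pushforward law $Q_\psi$, and
\begin{equation*}
\text{p-val}(Z_{obs},G_{obs},Y_{obs}) = \bar H(W),\qquad \bar H(t) := \mathbb{P}_{G\sim Q}\bigl(\psi(G)>t\bigr),
\end{equation*}
so $\bar H$ is (up to the strict-versus-weak inequality) the survival function of $W$. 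Applying the super-uniformity lemma --- for any real random variable $W$ with survival function $\bar H$ one has $\mathbb{P}(\bar H(W)\le\alpha)\le\alpha$ for all $\alpha\in(0,1)$ --- yields $\mathbb{P}(\text{p-val}\le\alpha\mid Z_{obs}=z,\,{G_X}_{obs}=g_X)\le\alpha$, and taking expectation over $(Z_{obs},{G_X}_{obs})$ under $\mathcal{H}_0$ gives the stated bound.

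The main obstacle is the step identifying the observed graph, after conditioning on $Z_{obs}$ and the graph covariates, with a draw from the randomization law $P(G\mid Z,G_X)$ used inside Procedure~1: this is the network analogue of the exchangeability of the realized assignment with the re-randomized ones that underpins Fisher-type tests, and it is exactly what knowledge of $P(Z,G)$ (together with $G_X$ being a function of $G$) delivers. One further point deserves care: the p-value in Step~4 of Procedure~1 is written with a strict inequality, so to conclude from the survival-function lemma one should either assume the randomization distribution of $\psi(G)$ is continuous (no ties, the generic case) or read the inequality as $\ge$; with ties and a strict inequality the transformed p-value can dip slightly below uniform, so a fully precise statement must pin down that convention.
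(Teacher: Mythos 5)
Your proposal is correct and follows essentially the same route as the paper's own proof: condition on $Z_{obs}$ and ${G_X}_{obs}$, use Proposition \ref{Prop impute} so that $T_c(\cdot\,|Z_{obs})$ is a fixed function of the graph, identify $G_{obs}$ (after conditioning) as a draw from the randomization law $P(G|Z_{obs},{G_X}_{obs})$, and conclude by the probability integral transform / super-uniformity of the survival transform. Your closing caveat about the strict inequality in Step 4 is a genuine refinement: the paper's proof applies the probability integral transform without addressing atoms, whereas with ties the strictly-defined p-value $P(T_c>T_{c,obs})$ can be anti-conservative, so the inequality should indeed be read as $\geq$ (or continuity of the null distribution assumed), exactly as you note.
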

\begin{proof}
    The proof is given in Appendix \ref{proof validity rand}.
\end{proof}
% \begin{proof}
%     We define p-value as
%     \begin{equation*}
%         p_{Z_{obs}} = P(T_c(G|Z_{obs})> T_c(G_{obs}|Z_{obs})| Z_{obs})
%     \end{equation*}
% Given $Z_{obs}$, we define a variable $U$ with the same distribution as $T_c(G|Z_{obs})$. We emphasize that $U$ is a univariate distribution induced by $P(G|Z_{obs})$. Thus,
% \begin{equation*}
%     p_{Z_{obs}} = 1 - F_{U}(G_{obs}|Z_{obs})
% \end{equation*}
% Here, $F_U(.)$ represents the cumulative distribution of $U$. Then,
% \begin{equation*}
%     T_c(G|Z_{obs}) = T_c(G_{obs}|Z_{obs})\qquad \text{given } \mathcal{H}_0
% \end{equation*}
% Thus, under $\mathcal{H}_0$, the distribution of $U$ is $P(G_{obs}|Z_{obs})$ where the randomness is induced by $G_{obs}\sim P(G|Z_{obs})$ and we get 
% \begin{equation*}
%      p_{Z_{obs}} = 1 - F_{U}(U)
% \end{equation*}
% Hence, by using the probability integral transform theorem, we obtain 
% \begin{equation*}
%     P(p_{Z_{obs}} \leq \alpha) \leq \alpha
% \end{equation*}
% \end{proof}
\begin{remark}
    Note that conditional validity implies unconditional validity of the procedure, too. That is, Procedure \ref{procedure 1} holds finite sample validity with respect to the bivariate joint distribution of the treatment assignment vector $(Z, G)$, following the law of iterated expectations. 
\end{remark}
Theorem \ref{validity rand} proves the finite sample validity of Procedure \ref{procedure 1}. However, since $P(G)$ is unknown in practice, in Section \ref{Permutation test} we develop procedures that avoid direct estimation of the network distribution by conditioning on a sufficient statistic. A crucial component of the proof is identifying the conditional distribution $P(G|Z, G_{X})$ for the validity of Procedure \ref{procedure 1}. This is contingent on the experimental design, and different designs yield different conditional randomization distributions. We compare this with the state-of-the-art methodology presented in Section \ref{Setup}, where the null randomization distribution is derived from $P(Z|F)$. Here, $F$ represents focal units. A common problem in the literature is power loss when implementing methods in cluster-randomized experiments. 

In a cluster-randomized experiment, the population is divided into clusters, and treatments are assigned at the cluster level. If a cluster is treated, all units in that cluster receive the treatment or vice versa. Fixing the treatment assignment of units in the conditioning mechanism might render the null randomization distribution $P(Z|F)$ degenerate, given the restricted span of cluster treatment assignment space. For example, assume there is a focal unit in every cluster. Then, conditioning on the observed treatment of focal units, we obtain $Z|F =\{Z_{obs}\}$ since all the unit treatment assignments can be deterministically imputed by the treatment assignment of the focal unit in that cluster. This reduction in support for the conditional distribution leads to either degeneracy or a substantial loss of power. To address this, our methodology captures the null distribution by randomizing over the conditional distribution $G|(Z, G_{X})$, which has ample support. We discuss this in Sections \ref{cre_section} and \ref{cluster}, where we identify the null randomization distribution under different experimental designs.   

\section{Experimental design}
\label{Experimental design}
Here, we discuss how Procedure \ref{procedure 1} can be implemented in different experimental designs. A change in experimental design corresponds to a change in (direct) treatment assignment mechanism. Thus, correctly identifying the distribution $P(G|Z)$ (and $P(G|Z, G_{X})$) is important to ensure the validity of Procedure \ref{procedure 1}. We derive $P(G|Z)$ for two experimental designs: completely randomized and cluster-randomized. 

\subsection{Completely randomized experiment}
\label{cre_section}
Consider a random sample of units of size $N_t$ from $\mathbb{P}$, which are selected in the treatment group. Let $N_t$ be fixed. Thus, the remaining units are in the control group. Let $N_c = N-N_t$ be the size of the units in the control group.
\begin{equation}
\label{cre}
    P(Z=z\;|\;G) = 
    \begin{cases}
        \frac{1}{\binom{N}{N_t}}& \sum_{i=1}^{N} z_i = N_t,\\
        0 & \text{otherwise}.
    \end{cases}
\end{equation}
We remark that $G \!\perp\!\!\!\perp Z$ by design. Hence, for completely randomized experiments, $P(G|Z) = P(G).$ Since $P(G)$ is not identified by design, we will explore random graph models in Section \ref{sec: random graph}.  
\subsection{Cluster randomized experiment}
\label{cluster}
Consider the following partition of the population $\mathbb{P}$ into clusters $\{C_1,C_2,....,C_k\}$. That is, $ C_i \cap C_j = \emptyset \forall i\neq j \in [k],\ \ 
     \cup_{i=1}^{k} C_i = \mathbb{P}.$
Given such a partition, treatment assignment is performed at the cluster level rather than the unit level. Following \citet{ugander2013graph}, we perform a Bernoulli experiment at the cluster level. Thus, we obtain $W_i \sim Bernoulli(p),\ \
     Z_j = \sum_{i=1}^{k} W_i\cdot \mathcal{I}(j \in C_i).$
Here, $p$ is the probability that a cluster receives treatment, and the experiment is performed on a given cluster partition of the population. Cluster formation is part of the experimental design and is fixed. The experimenter can form these clusters based on the underlying network's structure. Benchmark graph clustering algorithms can be used to obtain a cluster partition that depends on graph characteristics such as modularity, cuts, and related quantities (\citet{newman2006modularity}). We use $\epsilon$-net clustering to obtain the network's cluster partition, as previously considered by \citet{eckles2017design} and \citet{ugander2013graph}. We describe the $\epsilon$-net clustering in detail in Appendix \ref{epsilon net}. While we present a deterministic clustering procedure, it should be noted that any other algorithm, such as a different selection rule or the resulting $\epsilon$-net cluster, constitutes a change in the experimental design. Unlike completely randomized experiments, in cluster randomized experiments, $G \not\!\perp\!\!\!\perp Z$. So,
\begin{equation}
\begin{aligned}
    P(G|Z) &= \frac{P(Z|G)\cdot P(G)}{P(Z)},\\
           &= \frac{\prod_{i=1}^{k} p^{\mathcal{I}(\exists j \in [N] \text{ s.t. } j \in C_i, z_j=1)}\cdot\prod_{i=1}^{k} (1-p)^{\mathcal{I}(\exists j \in [N] \text{ s.t. } j \in C_i, z_j=0)}\cdot P(G)}{P(Z)}.
\end{aligned}
\end{equation}
This shows that identifying the null randomization distribution is challenging in graph cluster randomization experiments. We propose a new approach that uses sufficient statistics from the random graph model to construct a permutation test. 

\section{Permutation test via distributional properties}
\label{quasi permutation test}
Identification of the null randomization distribution $P(G|Z, G_{X})$ involves knowledge of the distribution of the network-generating process, which is not identified by the design of the experiment. To this end, we explore random graph models to model $P(G)$. As discussed in the previous section, identifying the null randomization distribution becomes challenging in cluster-randomized experiments. This challenge is overcome in our proposed methodology, which is a conditional permutation test. We discuss this in Section \ref{Permutation test}. We now describe the random graph models for $P(G)$. 
\subsection{Random graph models}
\label{sec: random graph}
This section describes various random graph models based on which $P(G)$ can be modeled. 
%The seminal paper by Erdős and Rényi, \citet{erdHos1960evolution}, introduced the concept of a random graph. 
The Erdős-Rényi random graph model has been studied widely, owing to exact computations of desirable graph properties. We present the model formally in Appendix \ref{ER model}. However, the Erdős-Rényi random graph model does not represent most real-world networks, such as social networks, supply-chain networks, and scientific collaboration networks, among many others. We look at a generalization of Erdős-Rényi random graph better at modeling such real-world networks in Section \ref{RG_abit}.  

\subsection{Random graphs with arbitrary degree distribution}
\label{RG_abit}
To better reflect the realistic degree distributions observed in real life, \citet{newman2001random} proposed characterizing a random graph by its degree distribution. We formulate the model below.
\begin{definition}
    Consider a random graph $G \in \{0,1\}^{N\times N}$ such that $P(G) = f(deg(i): i\in \mathbb{P}|p_0,p_1,...,p_{N-1})$. Here, $\{p_i\}_{i=0,...,N-1}$ represent the degree distribution, that is, $p_i$ is the probability that a randomly selected unit has degree $i$ and $f(.)$ represents the probability mass function. We call $G$ a random graph with an arbitrary degree distribution.
\end{definition}

A key feature of random graphs with arbitrary degree distributions is their ability to capture a wide range of degree distributions. This generalizes the Erdős-Rényi random graph model and, at the same time, ensures tractability of important structural graph properties (\citet{newman2002random}). To deploy Procedure \ref{procedure 1}, we must identify $P(G)$. This involves estimating $(p_1,p_2,...,p_N)$ with a single realization of the graph, and we cannot directly implement Procedure \ref{procedure 1} without estimating $P(G)$. For random graphs with arbitrary degree distributions, it is shown that we can estimate the graph's parameters with just one realization, albeit with some uncertainty (\citet{chatterjee2011random}). However, there might be other relational dependencies characterizing the network's formation that are not captured by random graphs with arbitrary degree distributions. For example, the phenomenon of transitivity is not captured in this model: two nodes are more likely to be connected if they have a common neighbor. 

\subsection{Exponential random graph model}
The state-of-the-art method for characterizing the random graph formation in social networks is the exponential random graph model (ERGM; \citet{snijders2006new}). It is a generalization of random graphs with arbitrary degree distributions, in which the distribution of the random graph is characterized by a set of given sufficient statistics. This could also include information on covariates to capture homophily in the network-formation process. In this article, we consider the random graph formation free of covariates. 
\begin{definition}
    Consider a random graph $G \in \{0,1\}^{N\times N}$ such that $P(G=g|\eta)= \frac{1}{\kappa}\cdot{e^{\eta^{T}\cdot s(g)}}$. Here, $s(g)$ represents a vector of sufficient statistics, and $\kappa$ is the normalizing constant. We call $G$ an exponential random graph.
\end{definition}
Common sufficient statistics used in the literature for characterizing social networks include the number of edges, the number of triangles, and the number of stars of different sizes, such as stars of size 2 or 3 (\citet{robins2007introduction}). It is important to carefully choose the model, as an ill-posed ERGM can lead to near-degeneracy (\citet{chatterjee2013estimating, schweinberger2020exponential}). As with random graphs with arbitrary degree distributions, estimating ERGMs is challenging with only a single realization.

Owing to estimation issues with the null distribution, we propose a new method that avoids such estimations and reduces to a simple permutation test. In our proposed methodology, we first consider the sufficient statistics of the random graph distribution as the degree sequence and develop a method conditioning on the sufficient statistic. We then generalize the method to a set of arbitrary sufficient statistics. We exploit the equiprobability of graphs within equivalence classes of $P(G)$ to construct a conditional permutation test, as described below.

\section{Permutation test based on equiprobable events}
\label{Permutation test}
\subsection{Test for a random graph with arbitrary degree distribution}
Here, we present our main result, which proposes obtaining a conditional randomization test without estimating the distribution. Our main idea is to identify a sufficient statistic for the random graph model, which is the degree sequence in the case of random graphs with arbitrary degree distribution, and condition on this sufficient statistic. We sample the null distribution over all the graphs with the same degree sequence as the observed graph. We obtain a permutation test by assuming that the event set of all graphs with the same degree sequence is equiprobable. We now formally present the definitions below and proceed to state and prove the finite sample validity of the proposed method.

We define the degree sequence of a graph $G = (V, E)$.
\begin{definition}
\label{def deg seq}
Let $G=(V,E)$ be a graph with $|V|=N$.  
The degree sequence of $G$, denoted $D$, is the sequence
\[
deg(G) = (\deg_G(v_1),\ldots,\deg_G(v_N))
\]
sorted in nondecreasing order.
\end{definition}
Not all $D \in \mathbb{W}^N$ can be a valid degree sequence. For example, consider $D = \{1,2\}$ as a candidate for a degree sequence for a graph of size $2$. We remark that the maximum degree of a node in the graph is at most $(N-1)$.  Since no vertex in a graph of size two can have a degree of more than 1, $D$ is not a valid degree sequence. Note that graphs with the same degree sequence are not necessarily isomorphic graphs. This implies that the graph's vertices cannot be permuted to obtain the other graph with edges preserved (see Figure \ref{fig:degree_sequence}). 
\begin{figure}[t]
    \centering
    \includegraphics{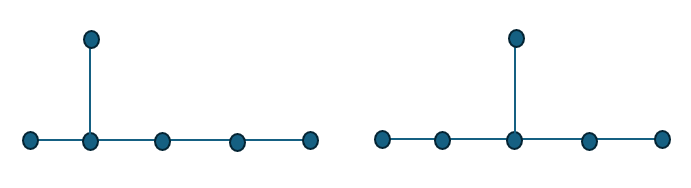}
    \caption{Non-isomorphic graphs with the same degree sequence}
    \label{fig:degree_sequence}
\end{figure}
The converse holds. Consider a set of distinct graphs with the number of units fixed to $N$ and a degree sequence of the graphs fixed to a valid degree sequence $D$. Then, this set of all graphs with degree sequence $D$ will form an equivalence class, and we can partition the space of all random graphs with this equivalence relation. We define this below. 
\begin{definition}
\label{relation}
    Consider a random graph $G \in \{0,1\}^{N\times N} \sim P(G)$ where $P(G) = f(deg(i): i\in V|p_0,p_1,...,p_{N-1})$ where $\{p_i\}_{i=0,...,N-1}$ represent the degree distribution of $G$. Let $\mathcal{G}$ be the support of $P(G)$. We define a relation $d:\mathcal{G}\rightarrow \mathcal{G}$ such that $G \equiv_d G'$ if $deg(G) = deg(G')$ such that $deg_i(G) = deg_i(G') \forall i \in [N]$. 
\end{definition}
It is a quick check that $d$ defines an equivalence relation. Hence, we can partition the support of $P(G)$, where each part represents an equivalence class of $D$.
\begin{proposition}
\label{Prop deg seq}
    Consider $d:\mathcal{G}\rightarrow \mathcal{G}$ such that $G \equiv_d G'$ if $deg(G) = deg(G')$, as defined in Definition \ref{relation}. Then, $d$ is an equivalence relation. 
\end{proposition}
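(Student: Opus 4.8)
The plan is to verify the three defining properties of an equivalence relation for the binary relation $\equiv_d$ on $\mathcal{G}$, namely reflexivity, symmetry, and transitivity. The key observation underpinning all three is that $G \equiv_d G'$ holds precisely when the degree sequences agree coordinatewise, i.e.\ $deg_i(G) = deg_i(G')$ for every $i \in [N]$; this reduces the problem to the fact that ordinary equality of real (integer) vectors is itself an equivalence relation.

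First I would check reflexivity: for any $G \in \mathcal{G}$, we trivially have $deg_i(G) = deg_i(G)$ for all $i \in [N]$, so $G \equiv_d G$. Next, symmetry: if $G \equiv_d G'$, then $deg_i(G) = deg_i(G')$ for all $i$, and since equality is symmetric, $deg_i(G') = deg_i(G)$ for all $i$, hence $G' \equiv_d G$. Finally, transitivity: suppose $G \equiv_d G'$ and $G' \equiv_d G''$; then for each $i \in [N]$ we have $deg_i(G) = deg_i(G')$ and $deg_i(G') = deg_i(G'')$, so by transitivity of equality $deg_i(G) = deg_i(G'')$ for all $i$, giving $G \equiv_d G''$. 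This establishes that $d$ is an equivalence relation, and consequently (by the fundamental theorem on equivalence relations) it partitions $\mathcal{G}$ into disjoint equivalence classes, each indexed by a valid degree sequence $D$, as claimed in the remark following the proposition.

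There is essentially no obstacle here: the statement is a routine formal verification, and the only mild subtlety worth making explicit is that the relation is defined via the coordinatewise (labelled) degree vector rather than the sorted degree sequence $D$ of Definition \ref{def deg seq}, so one should be careful to phrase each step in terms of $deg_i(\cdot)$ to avoid any ambiguity about vertex labelling. If one instead wished to work with the sorted sequence, the same argument goes through verbatim since equality of sorted tuples is likewise reflexive, symmetric, and transitive; I would add a parenthetical remark to that effect for completeness.
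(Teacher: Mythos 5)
Your proof is correct and follows essentially the same route as the paper's: a direct verification of reflexivity, symmetry, and transitivity, each reduced to the corresponding property of equality of the (labelled) degree vectors. The added remark distinguishing the coordinatewise degree vector from the sorted degree sequence is a reasonable clarification but not a substantive difference.
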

\begin{proof}
    Refer to Appendix \ref{proof deg seq}.
\end{proof}
\begin{proced}
\label{procedure 2}
 Let $(Z_{obs}, G_{obs}, Y_{obs})$ be the joint treatment assignment vector and its corresponding outcome. Consider a test statistic $T_c(G|Z_{obs}) := T(Z_{obs},G,Y_{obs})$.
\begin{enumerate}
    \item Impute the observed value of test statistic ${T_c}_{obs} = T_c(G_{obs}|Z_{obs})$.
    \item Define the equivalence class $[G_{obs}] = \{G \in \mathcal{G}|\quad G \equiv_d G_{obs}\}$. Here, $\equiv_d$ is the relation defined in Definition \ref{relation}.
    \item Compute p-value as $p(Z_{obs},G_{obs},Y_{obs}) := \frac{1}{|[G_{obs}]|}\cdot\sum_{G \in [G_{obs}]}\mathcal{I}(T_c(G|Z_{obs})>T_c(G_{obs}|Z_{obs}))$.
\end{enumerate}
\end{proced}
While Procedure \ref{procedure 2} works well for completely randomized experiments, identifying the null randomization distribution becomes challenging in cluster-randomized settings. We next generalize to this case via the ERGM framework in Procedure \ref{procedure 3}. First, we state and discuss the validity of Procedure \ref{procedure 2}.
\begin{theorem}
\label{thm_permutation}
    Consider the null hypothesis of no spillover effects in Hypothesis \ref{hyp 4}, denoted by $\mathcal{H}_0$. Let $(Z_{obs}, G_{obs}) \sim P(Z, G)$ be the bivariate treatment assignment of a completely randomized experiment. We assume that $G_{obs} \sim P_G$ where $G$ is a random graph with arbitrary degree distribution. Let $G_{X_i} = deg(i)$ where $G_{X_i}$ denote the graph covariate(s) of unit $i$. Consider a test statistic $T(Z, G, Y(Z, G))$. Then, Procedure \ref{procedure 2} described above is conditionally valid at level $\alpha \in (0,1)$. That is,
    \begin{equation*}
    \mathbb{E}\bigl[\mathcal{I}(p(Z_{obs},G_{obs},Y_{obs})\leq \alpha)|\mathcal{H}_0\bigl]\leq \alpha \qquad \forall \text{ }\alpha \in (0,1).
\end{equation*}
Here, the expectation is taken for the distribution of $P(G_{obs}|Z_{obs})$.
\end{theorem}
\begin{proof}
Refer to Appendix \ref{proof_validity_perm}.
\end{proof}
As a corollary, we can substitute any subclass of the degree sequence equivalence class in Procedure \ref{procedure 2}, and the finite-sample validity of the Procedure is retained. For example, we can take the equivalence class to be all the graphs isomorphic to the observed graph, or automorphic to the observed graph, in the second step of Procedure \ref{procedure 2}. We present results for both these Procedures in Section \ref{simulation study}.\\
%As a corollary, one can take all the clusters to be size one, implying every unit is a cluster. This retrieves the setup for a completely randomized experiment, and thus Theorem \ref{thm_permutation} also holds for it.\\
We emphasize that generating graphs with a given degree sequence can be done in polynomial time. Given fixed parameters like the maximum degree of the graph, one can use state-of-the-art algorithms to efficiently sample graphs with a given degree sequence uniformly from the space of all the graphs with the same degree sequence (\citet{arman2021fast}). It has been shown that the size of this equivalence class can be exponentially large (\citet{barvinok2013number}). It should be noted that one should use a uniform sampler to avoid introducing any bias in the null distribution attributed to sampling. We can also overcome this by effectively taking many samples to approximate the null distribution.\\

\subsection{Test for exponential random graph model}

We generalize the above method to a class of exponential random graph models whose sufficient statistics are permutation-invariant. We can take the graph covariate vector, $G_{X}$, to be the same set of statistics. To sample graphs from $G|G_{X}$, we can condition on the graph automorphism class and generate the null cases from this class. One caveat is that the size of an automorphic graph class may not be large depending on the symmetry of the graph observed (\cite{erdos1963asymmetric}). To overcome this, we explore sampling approximate automorphic graphs. We fix the graph covariate to be the degree of a unit, and sample isomorphic graphs that preserve the degree sequence for the null cases. To achieve more generality, one can specify the graph covariate vector of interest and explore advanced graph sampling or rejection sampling techniques. We formally present the results below.

A graph is considered isomorphic to another graph if a one-to-one correspondence between the vertices of the two graphs preserves the edges between the graphs. Below, we present an equivalence condition for the same. 
\begin{definition}
    Two graphs $G$ and $G'$ are isomorphic if and only if a permutation matrix $P$ exists such that $G = P^{T}G'P$. We denote $G$ and $G'$ as isomorphic graphs using $G \cong G'$. 
\end{definition}
We note that graph isomorphism defines an equivalence relation. The permutation matrix comprises row binary vectors indexed by the one-to-one correspondence between the vertices of the graphs. We now define a class of exponential random graph models based on the permutation-invariance property of their sufficient statistics.  
\begin{definition}
    An exponential random graph model $G \in \{0,1\}^{N\times N}$ is called permutation-invariant if $s(\Pi(g))= s(g)$. Here, $\Pi$ represents the permutation function of the graph's vertices, and $s(.)$ represents the vector of sufficient statistics of the exponential random graph model.
\end{definition}
Many sufficient statistics used in real-life social network modeling follow this property. As discussed, some graph statistics include the number of triangles, the degree sequence, and the number of stars of size 2, and they are permutation-invariant. We model the data-generating process as an exponential random graph with a permutation invariance property and present a permutation testing procedure.
\begin{proced}
\label{procedure 3}   
 Let $(Z_{obs}, G_{obs}, Y_{obs})$ be the joint treatment assignment vector and its corresponding outcome. Consider a test statistic $T_c(G|Z_{obs}) := T(Z_{obs},G,Y_{obs})$.
\begin{enumerate}
    \item Impute the observed value of test statistic ${T_c}_{obs} = T_c(G_{obs}|Z_{obs})$.
    \item Consider the sets $Z^t := \{i\in \mathbb{P}: \: Z_i =t\}\; \forall\; t\in\{0,1\}$.
    \item Define the equivalence class
    \begin{equation*}
        [G_{obs}|Z_{obs}] = \{G \in \mathcal{G}\;|\: G[{Z_{obs}}^0] \cong G_{obs}[{Z_{obs}}^0],\: G[{Z_{obs}}^1] \cong G_{obs}[{Z_{obs}}^1], G \equiv_d G_{obs}\}.
    \end{equation*}   
    Here, $G[Z^t_{obs}]$ denotes the subgraph of $G$ induced by the units in $Z^t_{obs}$, as defined in Section \ref{Setup}.
    \item Compute p-value as 
    \begin{equation*}
        p(Z_{obs},G_{obs},Y_{obs}) := \frac{1}{|[G_{obs}|Z_{obs}]|}\cdot\sum_{G \in [G_{obs}|Z_{obs}]}\mathcal{I}(T_c(G|Z_{obs})>T_c(G_{obs}|Z_{obs})).
    \end{equation*}
\end{enumerate}
\end{proced}
Together, Procedures \ref{procedure 1}, \ref{procedure 2}, and \ref{procedure 3} form a coherent progression from the ideal case of a known network distribution to practical settings with unknown distributions under both completely and cluster-randomized designs. For illustration, we give a simple example of a sample permutation considered in the above Procedure \ref{procedure 3} and then state its validity.
\begin{example}
Consider a line graph comprising 6 units. That is, take $G_{obs}$ as $[1-2-3-4-5-6]$. Consider the units $\{1,2,4\}$ to be treated units, and the remaining to be control units. A valid permutation, according to Step 3 of the Procedure \ref{procedure 3}, permutes the vertices of the same degree and treatment status. In this example, the valid sample permutations are $[1-4-3-2-5-6]$, $[1-4-5-2-3-6]$, and $[1-2-5-4-3-6]$ in addition to $G_{obs}$.
\end{example}
\begin{theorem}
\label{thm_permutation_expo}
    Consider the null hypothesis of no spillover effects in Hypothesis \ref{hyp 4}, denoted by $\mathcal{H}_0$. Let $(Z_{obs}, G_{obs}) \sim P(Z, G)$ be the bivariate treatment assignment of a cluster randomized experiment. We assume that $G_{obs} \sim P_G$ where $G$ is a permutation-invariant exponential random graph model. Let $G_{X_i} = deg(i)$ where $G_{X_i}$ denote the graph covariate(s) of unit $i$. Consider a test statistic $T(Z, G, Y(Z, G))$. Then, Procedure \ref{procedure 3} described above is conditionally valid at level $\alpha \in (0,1)$. That is,
    \begin{equation*}
    \mathbb{E}\bigl[\mathcal{I}(p(Z_{obs},G_{obs},Y_{obs})\leq \alpha)|\mathcal{H}_0\bigl]\leq \alpha \qquad \forall \text{ }\alpha \in (0,1).
\end{equation*}
Here, the expectation is taken for the distribution of $P(G_{obs}|Z_{obs})$.
\end{theorem}
\begin{proof}
Refer to Appendix \ref{proof_validity_perm}.
\end{proof}

\subsection{Test statistic}
\label{test_statistic}
The choice of the test statistic is not unique, and the validity of the proposed test is upheld for any choice of test statistic. Test statistics capture departures from the null distribution, and better-suited test statistics that respond to deviations from the null will improve the test's power. Here, we propose a variation of the Has-Treated-Neighbor test statistic considered in \citet{athey2018exact}. This examines the difference in average treatment effect between control units with at least one treated neighbor and those with only control neighbors. Let $T_{I_c}(G|Z_{obs}) =$
    \begin{equation}  
    \begin{aligned}
    %T_{I_c}(G|Z_{obs})&= \\
        &\frac{\sum_{i=1}^{N}Y_{i}\cdot \mathcal{I}(\sum_{j=1}^{N}G_{ij}\cdot Z_{j_{obs}} >0,Z_{i_{obs}}=0)}{\sum_{i=1}^{N}\mathcal{I}(\sum_{j=1}^{N}G_{ij}\cdot Z_{j_{obs}} >0,Z_{i_{obs}}=0)} - \frac{\sum_{i=1}^{N}Y_{i}\cdot\mathcal{I}(\sum_{j=1}^{N}G_{ij}\cdot Z_{j_{obs}} =0,Z_{i_{obs}} = 0)}{\sum_{i=1}^{N}\mathcal{I}(\sum_{j=1}^{N}G_{ij}\cdot Z_{j_{obs}} =0,Z_{i_{obs}} = 0)}.
    \end{aligned}
    \end{equation}
    
The above test statistic accounts for spillovers from the treated to the control units. Under the null of no spillover effect across all units, regardless of direct treatment status, one can also define the above test statistic for the treated units. Similarly, let $T_{I_t}(G|Z_{obs})$ equals
    \begin{equation}
    \begin{aligned}
    %&T_{I_t}(G|Z_{obs}) = \\
        &\frac{\sum_{i=1}^{N}Y_{i}\cdot \mathcal{I}(\sum_{j=1}^{N}G_{ij}\cdot Z_{j_{obs}} >0,Z_{i_{obs}}=1)}{\sum_{i=1}^{N}\mathcal{I}(\sum_{j=1}^{N}G_{ij}\cdot Z_{j_{obs}} >0,Z_{i_{obs}}=1)} - \frac{\sum_{i=1}^{N}Y_{i}\cdot\mathcal{I}(\sum_{j=1}^{N}G_{ij}\cdot Z_{j_{obs}} =0,Z_{i_{obs}} = 1)}{\sum_{i=1}^{N}\mathcal{I}(\sum_{j=1}^{N}G_{ij}\cdot Z_{j_{obs}} =0,Z_{i_{obs}} = 1)}.
    \end{aligned}
    \end{equation}
    We now define the test statistic $T_I(G|Z_{obs})$ for testing the spillover effect across all units. It will be a weighted average of the two test statistics defined above. 
    \begin{equation}
        T_I(G|Z_{obs}) = \frac{N_c}{N}\cdot T_{I_c} + \frac{N_t}{N}\cdot T_{I_t}.
    \end{equation}
    It should be noted that the above test statistics can only be defined if the sub-populations under consideration are non-empty. For example, if no control unit exists with all neighbors under control, or if there are few such units, we face either a definition or a power issue. Thus, we propose a generalization of the above test statistic to improve robustness. We examine the difference in average treatment effect between control units with a proportion of treated units at or above the 0.75 quantile and those with a proportion at or below the 0.25 quantile. Let $T_{quant_c}(G|Z_{obs}) =$
    \begin{equation}
    \begin{aligned}
    %T_{quant_c}(G|Z_{obs})&= \\
        &\hspace{-1.5cm}\frac{\sum_{i=1}^{N}Y_{i}\cdot \mathcal{I}(Z_{i_{obs}} =0)\cdot\mathcal{I}(\frac{\sum_{j=1}^{N}G_{ij}\cdot Z_{j_{obs}}}{\sum_{j=1}^{N}G_{ij}} \geq p_{0.75})}{|P_{0.75}|} \\ &- \frac{\sum_{i=1}^{N}Y_{i}\cdot \mathcal{I}(Z_{i_{obs}} =0)\cdot\mathcal{I}(\frac{\sum_{j=1}^{N}G_{ij}\cdot Z_{j_{obs}}}{\sum_{j=1}^{N}G_{ij}} \leq p_{0.25})}{|P_{0.25}|}.
    \end{aligned}
    \end{equation}
    Here, $p_{0.25}$ and $p_{0.75}$ represent the first and the third quartile of the vector 
    \begin{equation*}
        P:= \left\{ \frac{\sum_{j=1}^{N}G_{ij}\cdot Z_j}{\sum_{j=1}^{N}G_{ij}} : Z_i = 0\quad\forall i \in [N]\right\}.
    \end{equation*} 
$P_{0.25}$ and $P_{0.75}$ represent the set of elements in $P$ in the first and third quartile, respectively. In a similar fashion to the test statistic $T_I(G|Z_{obs})$, we define the above for treated units and take a weighted average for the two test statistics to obtain $T_{quant}(G|Z_{obs})$.

We note that the test statistic's power can be improved by using appropriate covariates or nodal attributes. This can be incorporated by considering regression-adjusted test statistics taking residuals for the potential outcome (\citet{rosenbaum_observational_2002}). One can also build test statistics based on models for interference structures. For example, parametric test statistics can be constructed of the linear-in-means model of interference proposed by \citet{manski1993identification}. 
%We also consider more test statistics, given that the choice is not unique, and different test statistics would differ in their power. 
We use the test statistic proposed by \citet{bond201261}, which is the difference between the average over all edges where the neighbor is a treated or control unit. 

\section{Simulation study}
\label{simulation study}
We present Monte Carlo simulations to validate our proposed testing procedure. We consider two test statistics: one based on the quantile of the proportion of treated neighbors in the simulations and another weighted by treated neighbors (first proposed in \citet{bond201261}), as discussed in Section \ref{test_statistic}. The following subsection will describe the data-generating process and the general setup. We conduct a simulation study for two experimental designs: completely randomized and cluster randomized. As a robustness check, we repeat the study for two network-generating processes.

\subsection{Data generating mechanism}
We now present the data-generating process considered in the simulations. Our potential outcome function is linearly separable into direct treatment effect and spillover effect. The spillover mechanism depends on the unit's neighbors' direct treatment assignments. The potential outcome function also has an additive heterogeneous effect coming from a network-based property, free of a direct treatment assignment vector. We propose that the mechanism be contingent on the proportion of treated neighbors. 
\begin{equation}
\label{outcome_prop}
Y_{i}(z) =
\begin{cases}
    \tau_{\text{direct}} \cdot z_{i} + \tau_{\text{spill}} \cdot \frac{\sum_{j=1}^{N}G_{ij} \cdot z_{j}}{\sum_{j=1}^{N}G_{ij}} + \beta_{\text{deg}}\cdot \frac{\sum_{j=1}^{N}G_{ij}}{\text{max}_i{\sum_{j=1}^{N}G_{ij}}}+ \epsilon_{i} & \sum_{j=1}^{N}G_{ij} > 0,  \\
    \tau_{\text{direct}} \cdot z_{i} + \epsilon_{i} & \text{otherwise}.
\end{cases}
\end{equation}

Here, $\tau_{direct}$ captures the direct treatment effect, $\tau_{spill}$ captures the spillover effect, and $\beta_{deg}$ captures the network-effect. The error term $\epsilon_i \sim N(0,1)$ represents the baseline outcome level in the absence of treatment.
%We note that $\epsilon_i$ can be viewed as the baseline level of outcomes for all the units if they were untreated. 
Without loss of generality, we take the mean of the normal distribution to be zero, but it can be arbitrarily chosen.
We first present the setup for a completely randomized experiment. The treatment assignment vector is chosen by randomly sampling $N_t$ units out of $N$ units and assigning them treatment (that is, $Z_i = 1$ for all $i$ corresponding to units in the random sample). We state that $N_t$ is fixed and pre-determined by the experimenter. In the performed simulations, we have $599$ total units and $300$ treated units in the model in one case. We study the setup for two generating processes: small-world networks and the stochastic block model. Both models exhibit a high degree of clustering, characterized by the clustering coefficient. The clustering coefficient is the number of connections within a randomly chosen node's neighborhood relative to the total number of possible connections. Consider a regular graph in which all nodes have the same number of connections. Let us denote the number of connections a unit has in a regular graph by $K$. Such graphs have a high clustering coefficient, but the distance between nodes is also high. The average distance between two nodes in real-life networks is low (\citet{amaral2000classes}). To obtain a graph with a high degree of clustering but a low average path length, one can rewire the edges of a $K$-regular graph with some probability, say $p$. This parametrization characterizes a small-world network. The higher the rewiring probability, the lower the average path length. Another class of random graph models that achieve the same property of a high clustering coefficient and a low average path length is the stochastic block model. Consider a graph of size $N$ with a partition of nodes. The size of the parts are $\{n_1,n_2,...,n_K\}$. These parts represent clusters in the graph, and we can model a random graph using them by defining the probability of an edge within a part, between two parts, or between nodes within different parts. Hence, we obtain a probability (symmetric) matrix of size $K\times K$. Note that because we also want a high degree of community structure in the graph, the diagonal elements of the probability matrix are typically higher than the off-diagonal elements. The probability matrix is also called the preference matrix in the literature. We also have a coefficient of network dependence in the potential outcome model, denoted by $\beta_{\text{deg}}$. We take $\beta_{\text{deg}} = 0$ in Table \ref{table: cre} for a comparative study with the state-of-the-art literature (\cite{athey2018exact}). We now define the setup for the simulation study:
\begin{enumerate}
    \item Network:~We first present results taking network distribution as a small-world network with the following parameters:~K (initial connection of all units) = 10 and $p_{rw}$ (probability of rewiring) = 0.1. We then consider a stochastic block model with block sizes $\{50, 100, 40, 110, 299 \}$. The preference matrix, which specifies the probability of an edge between two groups, is given below:
    \begin{table}[h!]
  \centering
  \begin{tabular}{ccccc}
    \hline\vspace{1mm}
    0.08 & 0.01 & 0.01 & 0.01 & 0.01 \\
         & 0.05 & 0.01 & 0.01 & 0.01 \\
         &      & 0.05 & 0.01 & 0.01 \\
         &      &      & 0.05 & 0.01 \\
         &      &      &      & 0.09 \\
    \hline
  \end{tabular}
  \label{pref_matrix}
\end{table}
    \item Direct treatment effect:~We take $\tau_{direct}$ equal to 0 and 4 in our simulations.
    \item Spillover effect:~We take the spillover effect $\tau_{spill}$ to be 0 and 0.4.
    \item Test statistic:~We consider the $T_{quant}$ test statistic described in Section \ref{test_statistic}. It takes the difference in the treatment effect of control units in the first quartile of the proportion of treated neighbors to control units in the last quartile. We define the same for treated units and take the weighted average. We also consider the $T_{bond}$ test statistic, proposed in \citet{bond201261}, defined as follows:
    \begin{equation*}
        T_{bond}(G|Z_{obs}) = \frac{\sum_{i,j\in [N]}G_{ij}\cdot Z_j\cdot Y_i}{\sum_{i,j\in [N]}G_{ij}\cdot Z_j} - \frac{\sum_{i,j\in [N]}G_{ij}\cdot (1-Z_j)\cdot Y_i}{\sum_{i,j\in [N]}G_{ij}\cdot (1-Z_j)}.
    \end{equation*}
\end{enumerate}

%Two variants of Procedure 2 were implemented. Under Procedure 2, 
The null distribution was generated by sampling 1000 graphs with the same (labeled) degree sequence as the observed graph and degree-preserving isomorphic to it, in a second option. 
%These null distributions were approximated by generating 1000 such graphs. 
{
\begin{table}[H]
\caption{Rejection rates for the null hypothesis of no spillover effects for a completely randomized experiment (CRE). 
%Here, Degree Permutation Test refers to Procedure 2, and Isomorphism Permutation Test refers to Procedure 3. 
For Procedure \ref{procedure 3}, one does not necessarily need to permute treated and control units separately in the case of CRE, unlike in cluster-randomized experiments.}
  \label{table: cre}
  \centering
  \vspace{0.25cm}
  \small
  \begin{tabular}{cccccccc}
    \hline
    \multirow{2}{*}{\shortstack{Network\\ Distribution}}&\multirow{2}{*}{\shortstack{Direct\\Effect}} & \multirow{2}{*}{\shortstack{Spillover\\ Effect}} & \multicolumn{2}{c}{\shortstack{\rule{0pt}{2.5ex}\small Degree \\ Permutation Test \\ (Procedure \ref{procedure 2})}}  & \multicolumn{2}{c}{\shortstack{\rule{0pt}{2.5ex}\small Isomorphism \\ Permutation Test \\(Procedure \ref{procedure 3})}} & \shortstack{\rule{0pt}{2.5ex}\small Conditional \\ Focal Test \\ (Procedure \ref{procedure 0})}\\
    \cline{4-8}
    & &  & $T_{bond}$ & $T_{quant}$&  $T_{bond}$ &$T_{quant}$ & $T_{elc}$\\
    \hline
    \multirow{4}{*}{\shortstack{Small world \\network}}& 0 & 0 & 0.045 & 0.049 & 0.051 & 0.045 & 0.050\\[0.1cm]
    & 4 & 0 & 0.054 & 0.046 & 0.046 & 0.050 & 0.046\\[0.1cm]
    & 0 & 0.4 & 0.187 & 0.148 & 0.193 & 0.163 & 0.112\footnotemark[1]\\[0.1cm]
    & \textbf{4} & \textbf{0.4} & \textbf{0.074} & \textbf{0.151} & \textbf{0.071} & \textbf{0.156} & \textbf{0.060}\footnotemark[2]\\
    \hline
    \multirow{4}{*}{\shortstack{Stochastic block \\model}}& 0 & 0 & 0.045 & 0.062 & 0.047 & 0.046 & 0.054\\[0.1cm]
    & 4 &  0 & 0.050 & 0.055 & 0.053 & 0.056 & 0.052\\[0.1cm]
    & 0 &  0.4 & 0.191 & 0.217 & 0.200 & 0.206 & 0.065\\[0.1cm]
    & \textbf{4} &  \textbf{0.4} & \textbf{0.064} & \textbf{0.220} & \textbf{0.055} & \textbf{0.214} & \textbf{0.048}\\
    \hline
  \end{tabular}
\end{table}
}

\footnotetext[1]{We reported the number presented in \citet{athey2018exact}. We obtained a rejection rate of 0.073 in our replication.}
\footnotetext[2]{We obtain the same rejection rate in our replication as reported in \citet{athey2015exact}.}
The p-value is the proportion of sampled graph realizations for which the imputed test statistic exceeds the observed test statistic. We reject the null hypothesis if the p-value obtained is less than the chosen significance level of 0.05. We replicated the above-described setup 4,000 times and calculated the proportion of times we rejected the null hypothesis.  We present the rejection rates in Table \ref{table: cre}. 
%Here, the first variant of Procedure 2 is labeled as the `Degree Permutation Test,' and the corresponding second variant is labeled as the `Isomorphism Permutation Test.' 
For robustness, we run simulations using two network-generating processes. We obtain the rejection rates for the same setup for the procedure proposed by \citet{athey2018exact}. We implement random focal-vertex selection and present results for the Edge-Level-Contrast test statistic (\citet{athey2018exact}). We label this test as `Conditional Focal Test'.

We then run simulations for a cluster-randomized experimental design. We present the results in Table \ref{table: cluster}. The total number of units in the simulation is the same as before: 599. We obtain the clusters in the graph using the epsilon-net clustering algorithm described in Algorithm 1 (Appendix \ref{epsilon net}). We set epsilon equal to 3. This implies that all clusters contain units within three hops of the center, whereas distinct cluster centers are at least three hops apart. We assume the probability of treating a cluster is 0.5. We implemented a Bernoulli trial at the cluster level. Under Procedure \ref{procedure 3}, the null distribution is approximated by taking 1000 samples of graphs isomorphic to the original graph, such that the corresponding subgraphs on treated units and controlled units remain isomorphic and the labeled degree sequence remains preserved. This is obtained by considering permutations in which treated units can be mapped only to treated units (thus, control units can be mapped only to control units), and mapping occurs only between two units with the same degree. We replicated this setup 4000 times and calculated the proportion of times the proposed test rejected the null. We label Procedure \ref{procedure 3} as `Block Isomorphism Permutation Test' in Table \ref{table: cluster}. This was performed for the two network-generating processes as before. 
{
 \begin{table}[H]
\caption{Rejection rates for the null hypothesis of no spillover effects for a cluster randomized experiment. 
%Here, Block Isomorphism Permutation Test refers to Procedure 3.
\cite{athey2018exact} and \cite{puelz2022graph} are omitted as both yield degenerate null distributions under cluster randomization, as discussed in Section \ref{sec: comparision}.}
  \label{table: cluster}
  \centering
  \vspace{0.25cm}
  \begin{tabular}{ccccc}
    \hline
    \multirow{2}{*}{\shortstack{Network \\Distribution}}&\multirow{2}{*}{\shortstack{Direct \\Effect}} & \multirow{2}{*}{\shortstack{Spillover\\ Effect}} & \multicolumn{2}{c}{\shortstack{\rule{0pt}{2.5ex} Block Isomorphism \\ Permutation Test\\ (Procedure \ref{procedure 3})}} \\
    \cline{4-5}\rule{0pt}{2.5ex}
    & &  & {\rule{0pt}{2.5ex}$T_{bond}$} & {\rule{0pt}{2.5ex}$T_{quant}$} \\[0.1cm]
    \hline
    \multirow{4}{*}{\shortstack{Small world\\ network}}& 0 & 0 & 0.053 & 0.046\\[0.1cm]
    & 4 & 0 & 0.051 & 0.054\\[0.1cm]
    & 0 & 0.4 & 0.420 & 0.360\\[0.1cm]
    & 4 & 0.4 & 0.435 & 0.360\\
    \hline
    \multirow{4}{*}{\shortstack{Stochastic block\\ model}}& 0 & 0 & 0.046 & 0.050\\[0.1cm]
    & 4 &  0 & 0.046 & 0.052\\[0.1cm]
    & 0 &  0.4 & 0.197 & 0.206\\[0.1cm]
    & 4 &  0.4 & 0.211 & 0.214\\
    \hline
  \end{tabular}
\end{table}
}
{
\begin{table}[H]
\caption{Rejection rates for the null hypothesis of no spillover effects for the outcome model including graph covariates. Here, $\beta_{deg} = 0.4$, and the Procedures deployed are the degree-preserving isomorphism/block isomorphism permutation test for a completely/cluster-randomized experiment.}
  \label{table: degree positive rep}
  \centering
  \vspace{0.25cm}
  \small
  \begin{tabular}{ccccccc}
    \hline
    \multirow{2}{*}{\shortstack{Network\\ Distribution}}&\multirow{2}{*}{\shortstack{Direct\\Effect}} & \multirow{2}{*}{\shortstack{Spillover\\ Effect}} & \multicolumn{2}{c}{\shortstack{\rule{0pt}{2.5ex}\small Completely \\ Randomized Experiment}}  & \multicolumn{2}{c}{\shortstack{\rule{0pt}{2.5ex}\small Cluster \\ Randomized Experiment}} \\
    \cline{4-7}
    & &  & $T_{bond}$ & $T_{quant}$&  $T_{bond}$ &$T_{quant}$ \\
    \hline
    \multirow{4}{*}{\shortstack{Small world \\network}}& 0 & 0 & 0.052 & 0.045 & 0.048 & 0.046 \\[0.1cm]
    & 4 & 0 & 0.050 & 0.056 & 0.054 & 0.053 \\[0.1cm]
    & 0 & 0.4 & 0.184 & 0.162 & 0.421 & 0.370 \\[0.1cm]
    & 4 & 0.4 & 0.065 & 0.168 & 0.415 & 0.330 \\
    \hline
    \multirow{4}{*}{\shortstack{Stochastic block \\model}}& 0 & 0 & 0.050 & 0.048 & 0.050 & 0.048 \\[0.1cm]
    & 4 &  0 & 0.051 & 0.050 & 0.046 & 0.043 \\[0.1cm]
    & 0 &  0.4 & 0.197 & 0.216 & 0.210 & 0.205 \\[0.1cm]
    & 4 &  0.4 & 0.071 & 0.208 & 0.186 & 0.202 \\
    \hline
  \end{tabular}
\end{table}
}

For robustness, we also perform simulations for the setup in which the potential outcome function depends on the graph, without direct treatment assignment. For this, we set $\beta_{\text{deg}}=0.4$, inducing a positive correlation with the degree of the unit and its corresponding potential outcome. We present the results in Table \ref{table: degree positive rep}. The simulations are replicated for both the completely randomized and the cluster-randomized experiments. The procedures deployed are the degree-preserving isomorphism permutation test and the degree-preserving block isomorphism permutation test. The rest of the setup remains the same as before.

\subsection{Comparison with related work}
\label{sec: comparision}
We see that our proposed methods have type 1 error controlled appropriately. This corresponds to rows with $\tau_{spill}$ equal to zero in Tables \ref{table: cre}, \ref{table: cluster}, and \ref{table: degree positive rep}. When $\tau_{spill}$ equals 0.4, our method performs competitively with existing methods in the literature (\citet{athey2018exact}). We highlight that $\tau_{direct}$ is 4 and $\tau_{spill}$ is 0.4, where our method significantly improves. The significance of the setup lies in its resemblance to many naturally occurring settings, where the direct effect is meaningfully more than the spillover effect (for example, see \citet{cai2015social}). We see in Table \ref{table: cre} that our proposed method increases power as the clustering of the network-generating process increases, as in the stochastic block model. This contrasts with the method proposed by \citet{athey2018exact}, which does not retain power. In Appendix \ref{insights_bond}, we add more insights on the performance of the $T_{bond}$ test statistic as observed in Table \ref{table: cre}.

An interesting result is that, in cluster-randomized experiments, greater clustering is associated with lower power. This can be observed by comparing the power values of the small-world network to those of the stochastic block model. This may be attributed to a reduction in the sample space of imputed values of the test statistic. Hence, a higher degree of clustering increases cluster sizes, thereby restricting direct treatment assignment. For instance, the observed graph in the small-world network had 16 clusters, compared to 11 in the stochastic block model.

We also conduct a simulation study for cluster-randomized experiments using the methods proposed in \citet{athey2018exact} and \citet{puelz2022graph}. As discussed in Section \ref{quasi-rand-test}, we encounter the degenerate null distribution and do not present the corresponding results. This is more pronounced in the method proposed by \citet{athey2018exact} than in that of \citet{puelz2022graph}. Since \citet{puelz2022graph} considers the experiment's design when selecting focal units, we see more robustness of the biclique test (\citet{puelz2022graph}) where degenerate null distributions are reduced in frequency. Overall, our method is more robust to degeneracy and performs well in terms of power in our simulations. We also see that the $T_{bond}$ test statistic performs slightly better than the $T_{quant}$ test statistic in cluster-randomized setups with small-world networks. This may be attributed to $T_{bond}$ being weighted by treated counts in the graph rather than the corresponding quantile-based weights in $T_{quant}$, and to the heterogeneity of the clustered assignment setup. We emphasize that $T_{bond}$ in conjunction with our methodology continues to be valid regardless of the size of the direct effect. The method proposed by \citet{bond201261} does not work for settings with non-zero direct effects. We refer the reader to \citet{athey2018exact} for a detailed discussion. Our method also retains the same power with graphical confoundedness of the potential outcome function, as shown in Table \ref{table: degree positive rep} where $\beta_{\text{deg}}$ = 0.4. 

\section{Weather insurance adoption in rural China}
\label{illustration}
We reanalyze a field experiment conducted in rural China among rice farmers that examined neighborhood interference effects (cited in \citet{cai2015social}). The research motive behind the experiment was to determine whether social networks among farmers influence weather insurance adoption. The experiment involved intensive training sessions to promote crop insurance products to farmers against extreme weather. A primary goal of the experiment was to determine whether information dissemination occurs within the farmers' community and, if so, what mechanisms drive these exchanges. %These are important questions to answer in this setting, as they are consequential for government policy action. 
A deeper understanding of insurance product adoption via social network mechanisms helps devise a more cost-effective government policy campaign to enhance product adoption among the target community. An important reason to study interference in such a setup is to avoid biased estimates and accurately evaluate policy performance measures.

%Once interaction mechanisms can be established, better ``seeding" or targeting strategies can be rolled out for future campaigns (\textcite{ok2014maximizing}). Such strategies identify influential figures in the community to maximize the interference effect and, hence, the program's overall impact. 

We discuss the experimental design setup by the researchers. Researchers hypothesize that greater access to information about weather insurance products would increase their adoption among rice farmers. To capture access to information differentially as treatment and control, two types of information sessions were created: simple sessions of 20 minutes, which provided an introduction to the insurance contract, and intensive sessions of 45 minutes that covered the insurance contract in much more detail, including explaining to them what the benefits are of taking up insurance. To capture information dissemination across the social network, these sessions were conducted in two rounds, with a 3-day gap between them. The idea behind this setup is to allow the farmers time to discuss the insurance policy and to identify the primary mechanism of information diffusion by examining whether households in round 2 with more friends who received intensive sessions in round 1 exhibit a higher insurance take-up rate. 

Second-round participants were further randomized into three groups. After the round 2 sessions, the first group was given no additional information, the second group was told about the overall take-up rate in the previous round, and the third group was told in detail about who purchased the product and who did not. The randomization of the second group was used to identify the primary mechanism of information diffusion. For illustration purposes, we do not analyze this part of the experiment.
%\PB{Although the social network among farmers is not experimentally randomized, we treat it as a realization of an underlying network-generating process, consistent with the interpretation in Section 2. Social ties among rural farmers evolve over time through repeated interactions and shared experiences, making the random graph assumption plausible in this setting. Discuss stratified randomization. The stratification here is by `household size and area of rice production per capita'. My initial thought is that procedure 3 works just as well for stratified randomization since the isomorphism is within the treatment classes. But can you verify or add supporting proof arguments (in the appendix)? And I think we need to rerun procedure 3 on the empirical data.}
Although the social network among farmers is not experimentally randomized, we treat it as a realization of an underlying network-generating process, consistent with the interpretation in Section \ref{Setup}. Social ties among rural farmers evolve over time through repeated interactions and shared experiences, making the random graph assumption plausible in this setting. 

The randomization design is stratified by `Household size' and `Area of rice production per capita'. Procedure \ref{procedure 3} extends naturally to stratified randomization. In stratified randomization, treatment is assigned independently within each stratum, with a fixed number of treated units per stratum. The block isomorphism permutation test conditions on treatment status, permuting only units of the same treatment status and degree. If strata membership is observed, one can further restrict permutations to units within the same stratum, and the uniformity argument in the proof of Theorem \ref{thm_permutation_expo} carries through within each stratum independently. Since treatment is assigned independently across strata, validity follows from independence across strata.
%In Appendix \ref{section: stratified}, we also discuss heuristics for assessing the validity of our method for a stratified randomization design.

There were 4902 households across 47 different villages in the experiment. Social networks across villages are assumed to be independent. Each household is randomized among the 4 different groups:~(i) Receiving a simple session in Round 1, (ii) Receiving an intensive session in Round 1, (iii) Receiving a simple session in Round 2, (iv) Receiving an intensive session in Round 2. After the session ends, the households decide to purchase insurance. This is modeled as a binary outcome variable, with 1 indicating a purchase and 0 indicating no purchase. Under the assumption that no information diffusion can occur between participant households in Round 1 because of the immediate purchase decision the household had to make, we can assume there are no edges among the households in Round 1. 

In our analysis, we treat households receiving simple sessions as the control group and those receiving intensive sessions as the treated group. 
%Since spillover can occur only from the treated group in Round 1 to the control group in Round 2, we consider only the network structure within these two groups in our study. 
Since first-order spillovers can flow from intensively treated Round-1 households to their Round-2 neighbors, we restrict attention to the network structure linking these two rounds. This includes the network structure among households in Rounds 1 and 2 only. Our analysis was run on 433 households of randomly chosen villages (labeled Type II in the experiment), of which 222 were in the treated group. 

We ran our proposed Procedure \ref{procedure 3}, the block isomorphism test, and approximated the null distribution by taking 10,000 samples of graphs with the equivalence class defined in Step 3 of Procedure \ref{procedure 3}. We obtained a p-value of 0.0126. Because this is below the conventional 0.05 threshold, we conclude that there are significant spillover effects.\footnote{To the best of our knowledge, strata information is not provided in the data for the Type II villages. We therefore treat the data as from a completely randomized experiment, which is more conservative than stratified randomization for variance estimation \citep{miratrix2013adjusting}.} In other words, households that receive the intensive information session appear to transmit the information to neighboring households that were not directly treated. Such spillovers have important implications for inference. If they are ignored, the estimated average treatment effect may be biased toward zero. As a result, one might incorrectly conclude that the intervention has no meaningful impact when, in fact, part of its effect is simply diffusing beyond the treated group. Properly accounting for the spillover mechanism is therefore essential for accurately interpreting treatment effects in this setting. 

\section{Discussion}
\label{conclusion}
%We develop a new methodology to test for the non-sharp null hypothesis of no spillover effect in an experimental setup. Our proposed method assumes natural distributional conditions in the network-generating process to build a quasi-randomization test. While promising, several open problems remain. Incorporating covariates or nodal attribute information to better characterize the random graph null model may enhance the test's power and is an interesting direction for future work. The possibilities of incorporating advanced graph-theoretic sampling techniques and their effects on power are another interesting line of inquiry. We also assume that the information about the given network is fully and correctly specified. Analyzing the random graph null models under partial information and network misspecification is another exciting avenue for future work. Finally, our focus in this paper has been exclusively on randomized experiments. Since the sampling of the null distribution is based on quasi-randomization of the network-generating process, a natural and essential extension is to study random graph null-testing procedures in observational studies.\\

We develop a new methodology to test for the non-sharp null hypothesis of no spillover effect in an experimental setup. By treating the network as a random variable rather than a fixed quantity, our method constructs a quasi-randomization test that delivers finite-sample valid inference while offering substantial power improvements over existing approaches, particularly in cluster-randomized trials. This opens a new and computationally tractable path for interference testing, while leaving several directions for future work.

Since the null distribution is sampled via quasi-randomization of the network-generating process, a natural extension is to study these procedures in observational studies. Incorporating covariates or nodal attribute information to better characterize the random graph null model may enhance the test's power and remains an open problem. Relaxing the assumption that the network is formed independently of treatment to accommodate settings with endogenous network formation is another important direction. 

%Analyzing robustness under partial information and network misspecification is also an open direction. \ST{While exponential random graph models are a realistic fit for modeling social behavior, they are often ill-suited in some settings. Examples include physical processes, such as atmospheric mixing, and temporal evolution of graphs, such as the citation network, amongst many others. It is an interesting direction to extend the method to different network-generating processes appropriate for the setting under consideration.}

As noted, the power gains come at the cost of a modeling assumption on the network-generating process, so analyzing robustness under partial information and network misspecification is another important direction. In particular, while exponential random graph models fit many social networks well, they can be ill-suited to settings with distinct generative structures, such as physically driven networks (e.g., atmospheric mixing) or temporally evolving networks (e.g., citation networks). Extending the method to alternative network-generating processes suited to the setting at hand is promising.

\vspace{0.25cm}
{\small
\textbf{Acknowledgments.} We thank Avi Feller, Fabrizia Mealli, Fredrik Sävje, Panos Toulis, Qingyuan Zhao, ACIC 2024 conference participants, and the Tom Ten Have Award committee for their encouragement and feedback. We also thank Sai Sriramya Gorripati, Parshuram Hotkar, Sumit Kunnumkal, and ISB seminar participants for their helpful comments. We gratefully acknowledge Hemanth Kumar and the ISB Institute of Data Science for providing access to the computing cluster. The IMS 2024 ICSDS Student Travel Award partially supported Supriya Tiwari’s research. Pallavi Basu’s research is partially supported by the SERB MATRICS award MTR/2022/000073.}
% References

\bibliographystyle{plainnat}  
{\small \bibliography{references} }

\newpage
\begin{appendices}

\appendix  % Switch to appendix mode
\begin{center}
    {\Huge\appendixname}
\end{center}

\section{Algorithm:~$\epsilon$-net cluster}
\label{epsilon net}
An $\epsilon$-net in the graph is a set of units of the graph such that any two units in the set are at least $\epsilon$ distance away from each other, and any unit outside the set is within $\epsilon$ distance. Here, distance represents the shortest path length between two units in the graph. To form an $\epsilon$-net cluster, one can pick a vertex and remove all the vertices within $\epsilon$ distance of the vertex. Then, pick another vertex and iteratively perform the same process until you exhaust all the vertices. The resultant set will be an $\epsilon$-net. Ties in the degree sort are broken uniformly at random and independently of the treatment assignment, so that the clustering is permutation-equivariant in distribution. We present an algorithm below to obtain an $\epsilon$-net cluster.\\[10pt]
\begin{algorithm}[http]
\label{algo:1}
  \caption{$\epsilon$-net cluster}
  \textbf{Input:} 
  \begin{itemize}
    \item Given network $G_{obs}$
  \end{itemize}

  \textbf{Procedure:} \\
  Sort $G_{obs}$ by degree of vertices in ascending order\\
  $S$ is equal to $G_{obs}$\\
  \While{$S$ is not empty}{
    \begin{itemize}
      \item Step 1: Pick the last vertex in $S$, denoted by $v$, and store in $V$
      \item Step 2: Find all units within $\epsilon$ distance of $v$
      \item Step 3: Store the units found in Step 2 as a cluster and remove these units from $S$
    \end{itemize}
  }
    Store the resultant cluster outputs in $C$\\
    Permute the outputs to the original ordering and store them in $C, V$\\
  \textbf{Output:} C and V 
\end{algorithm}

\section{Proofs}
\subsection{Proof of Proposition \ref{Prop impute}}
\label{proof impute}
\begin{proof}
    Note that under $\mathcal{H}_0$, 
    \begin{equation*}
        Y_{{G_X}_i}(Z,G) = Y_{{G'_X}_i}(Z',G') \qquad Z_i=Z'_i, {G_X}_i={G'_X}_i.
    \end{equation*}
    Thus, 
    \begin{equation*}
    \begin{aligned}
        T_c(G|Z_{obs}) &= T(Z_{obs},G,Y_{{G_X}_{obs}}(Z_{obs},G)),\\
        &= T(Z_{obs},G,Y_{{G_X}_{obs}}(Z_{obs},G')) \qquad P(G'|Z_{obs},{G_X}_{obs})>0\text{ under }\mathcal{H}_0.\\ 
    \end{aligned}
    \end{equation*}
    Since $Z_{obs}$ is arbitrary, we conclude imputability of the $T_c(G|Z_{obs})$ under the null $\mathcal{H}_0$. Take $G'=G_{obs}$, we get $T_c(G|Z_{obs})= T(Z_{obs},G,Y_{obs})$. 
\end{proof}
\subsection{Proof of Proposition \ref{Prop deg seq}}
\label{proof deg seq}
\begin{proof}
    $G\equiv_dG \quad \forall G \in \mathcal{G}$. Hence, $d$ is reflexive. If for any distinct $G$ and $G'$ in $\mathcal{G}$, $G\equiv_dG'\implies deg(G) = deg(G') \implies deg(G')=deg(G)\implies G'\equiv_d G$. Hence, $d$ is symmetric. Similarly, for any three distinct $G,G'$ and $G''$ in $\mathcal{G}$, $G\equiv_dG', G'\equiv_dG''\implies deg(G)=deg(G')$ and $deg(G')=deg(G'')\implies deg(G)=deg(G'')\implies G\equiv_dG''$. Hence, $d$ is transitive.
\end{proof}

\subsection{Proof of Theorem \ref{validity rand}}
\label{proof validity rand}
\begin{proof}
    We define p-value as
    \begin{equation*}
        p_{Z_{obs}} = P(T_c(G|Z_{obs})> T_c(G_{obs}|Z_{obs})| Z_{obs},{G_X}_{obs}).
    \end{equation*}
Given $Z_{obs}$ and ${G_X}_{obs}$, we define a variable $U$ with the same distribution as $T_c(G|Z_{obs})$. We emphasize that $U$ is a univariate distribution induced by $P(G|Z_{obs}, {G_X}_{obs})$. Thus,
\begin{equation*}
    p_{Z_{obs}} = 1 - F_{U}(G_{obs}|Z_{obs},{G_X}_{obs}).
\end{equation*}
Here, $F_U(.)$ represents the cumulative distribution of $U$. Then,
\begin{equation*}
    T_c(G|Z_{obs}) \stackrel{d}{=} T_c(G_{obs}|Z_{obs})\qquad \text{given } \mathcal{H}_0.
\end{equation*}
Thus, under $\mathcal{H}_0$, the distribution of $U$ is $P(G_{obs}|Z_{obs},{G_X}_{obs})$ where the randomness is induced by $G_{obs}\sim P(G|Z_{obs},{G_X}_{obs})$ and we get 
\begin{equation*}
     p_{Z_{obs}} = 1 - F_{U}(U).
\end{equation*}
Hence, by using the probability integral transform theorem, we obtain 
\begin{equation*}
    P(p_{Z_{obs}} \leq \alpha) \leq \alpha.
\end{equation*}
\end{proof}

\subsection{Proofs of Theorem \ref{thm_permutation} and Theorem \ref{thm_permutation_expo}}
\label{proof_validity_perm}
\begin{proof}
    We define p-value as
    \begin{equation*}
    \begin{aligned}
        p_{Z_{obs},G_{obs}} &= P(T_c(G|Z_{obs})> T_c(G_{obs}|Z_{obs})| Z = Z_{obs}, G \in [G_{obs}]),\\
        p_{Z_{obs}} &= P(T_c(G|Z_{obs})> T_c(G_{obs}|Z_{obs})| Z = Z_{obs}).\\
    \end{aligned}
    \end{equation*}
Given $Z_{obs}$ and $[G_{obs}]$, we define a variable $U$ with the same distribution as $T_c(G|Z_{obs})$. We emphasize that $U$ is a univariate distribution as induced by $P(G|Z=Z_{obs},G \in [G_{obs}])$. Thus,
\begin{equation*}
    p_{Z_{obs},G_{obs}} = 1 - F_{U}(G_{obs}|Z = Z_{obs},G\in [G_{obs}]).
\end{equation*}
Here, $F_U(.)$ represents the cumulative distribution of $U$. Then,
\begin{equation*}
    T_c(G|Z_{obs}) \stackrel{d}{=} T_c(G_{obs}|Z_{obs})\qquad \text{given } \mathcal{H}_0.
\end{equation*}
Thus, under $\mathcal{H}_0$, the distribution of $U$ is $P(G_{obs}|Z_{obs})$ where the randomness is induced by $G_{obs}\sim P(G|Z = Z_{obs},G\in[G_{obs}])$ and we get 
\begin{equation*}
     p_{Z_{obs},G_{obs}} = 1 - F_{U}(U).
\end{equation*}
Hence, by using the probability integral transform theorem, we obtain 
\begin{equation*}
    P(p_{Z_{obs},G_{obs}} \leq \alpha) \leq \alpha.
\end{equation*}
Let $P_1,P_2,...,P_d$ be the partition induced by $\equiv_d$. Pick $G_1,G_2,...,G_d$ such that $G_i \in P_i$ for all $i=1,2,...,d$. Hence,
\begin{equation*}
\begin{aligned}
     P(p_{Z_{obs},G_i} \leq \alpha) &\leq \alpha\qquad i\in [d],\\
     \sum_{i\in [d]}P(p_{Z_{obs},G_i}\leq \alpha)\cdot P(G \in [G_i])) &\leq \sum_{i\in [d]}\alpha\cdot P(G \in [G_i]),\\
     P(p_{Z_{obs}} \leq \alpha) &\leq \alpha.
\end{aligned}    
\end{equation*}
Let $[G_{obs}] = \{G \in \{0,1\}^{N \times N}:\: deg(G)=deg(G_{obs}) \: s.t.\: deg_i(G)= deg_i(G_{obs})\; \forall i \in [N]\}$. We argue that the distribution of $P(G|Z=Z_{obs}, G\in [G_{obs}])$ is a uniform distribution under a completely randomized experiment. Consider $g,g' \in [G_{obs}]$.
\begin{equation}
\label{eq cond com}
     P(G = g|Z=Z_{obs},G \in [G_{obs}]) = \frac{P(Z=Z_{obs}|G = g,G \in [G_{obs}])\cdot P(G = g| G \in [G_{obs}])}{P(Z= Z_{obs}|G \in [G_{obs}])}.
\end{equation}
\begin{equation}
\label{eq cond 1 com}
\begin{aligned}
     P(G = g|G \in [G_{obs}]) &=\frac{f_{p_0,p_1,...,p_{N-1}}(deg(g))}{\sum_{G\in[G_{obs}]} f_{p_0,p_1,...,p_{N-1}}(deg(G))},\\
     &= \frac{f_{p_0,p_1,...,p_{N-1}}(deg(g'))}{\sum_{G\in[G_{obs}]} f_{p_0,p_1,...,p_{N-1}}(deg(G))},\\
     &= P(G = g'|G \in [G_{obs}]).
\end{aligned}
\end{equation}
\begin{equation}
\label{eq cond 2 com}
\begin{aligned}
    P(Z=Z_{obs}|G = g,G \in [G_{obs}])  &= \frac{1}{\binom{N}{\sum Z_{obs, i}}},\\
    &= P(Z=Z_{obs}|G = g',G \in [G_{obs}]).
\end{aligned}
\end{equation}

Using Equation \eqref{eq cond 1 com} and \eqref{eq cond 2 com} in Equation \eqref{eq cond com}, we obtain 
\begin{equation*}
    P(G = g|Z=Z_{obs},G \in [G_{obs}]) = P(G = g'|Z=Z_{obs},G \in [G_{obs}]).
\end{equation*}
Since $g$ and $g'$ are arbitrary, we conclude that the distribution of $G$ conditional on the event $\{Z=Z_{obs},G\in [G_{obs}]\}$ is uniform under completely randomized experiment. Therefore,
\begin{equation*}
    p_{Z_{obs},G_{obs}} = \frac{1}{|[G_{obs}]|}\cdot\sum_{G \in [G_{obs}]}\mathcal{I}(T_c(G|Z_{obs})>T_c(G_{obs}|Z_{obs})).
\end{equation*}
Hence, we prove that Procedure \ref{procedure 2} is valid. We now prove the validity of Procedure \ref{procedure 3}. Let 
\[
[G_{obs}|Z_{obs}] = \{G \in [G_{obs}]\:|\: G[{Z_{obs}}^0] \cong G_{obs}[{Z_{obs}}^0],\: G[{Z_{obs}}^1] \cong G_{obs}[{Z_{obs}}^1] \}.
\]
Here, ${Z_{obs}}^t = \{i\in \mathbb{P}: \: {Z_{obs}}_i =t\}$ for $t\in\{0,1\}$. Note that $[G_{obs}|Z_{obs}]\subseteq [G_{obs}]$, where $[G_{obs}]=\{G \in \mathcal{G}| G \cong G_{obs}\}$. We argue that the distribution of $P(G|Z=Z_{obs}, G\in [G_{obs}|Z_{obs}])$ is a uniform distribution under a cluster randomized experiment. Let $C$ be a random variable defined as $\mathcal{A}\circ G$. Here, $\mathcal{A}$ represents the clustering algorithm in Algorithm \ref{algo:1}.

Consider $g,g' \in [G_{obs}|Z_{obs}]$. Then $P(G = g|Z=Z_{obs},G \in [G_{obs}|Z_{obs}]) =$
\begin{equation}
\label{eq cond' com}
     \frac{P(Z=Z_{obs}|G = g,G \in [G_{obs}|Z_{obs}])\cdot P(G = g| G \in [G_{obs}|Z_{obs}])}{P(Z= Z_{obs}|G \in [G_{obs}|Z_{obs}])}.
\end{equation}
\begin{equation}
\label{eq cond 1' com}
\begin{aligned}
     P(G = g|G \in [G_{obs}|Z_{obs}]) &=\frac{f_{\eta }(s(g))}{\sum_{G\in[G_{obs}|Z_{obs}]} f_{\eta}(s(G)},\\
     &=\frac{f_{\eta }(s(\pi(G_{obs})))}{\sum_{G\in[G_{obs}|Z_{obs}]} f_{\eta}(s(\pi'(G_{obs}))},\\
     &=\frac{f_{\eta }(s(G_{obs}))}{\sum_{G\in[G_{obs}|Z_{obs}]} f_{\eta}(s(G_{obs})},\\
     &= P(G = g'|G \in [G_{obs}|Z_{obs}]).
\end{aligned}
\end{equation}
\begin{equation}
\label{eq cond 2' com}
\begin{aligned}
    P(Z=Z_{obs}\:|\:G = g,& G \in [G_{obs}|Z_{obs}])\\  
    &\hspace{-3cm}= P(\sum_{i=1}^{|C|} W_i\cdot \mathcal{I}(j \in C_i) = {Z_{obs}}_j\; j\in [N]\:|\:G=g,G\in [G_{obs}|Z_{obs}])\\
    &\hspace{-3cm}= P(\sum_{i=1}^{|\pi(C_{obs})|} W_i\cdot \mathcal{I}(j \in \pi({C_{obs}}_i)) = {Z_{obs}}_j\; j\in [N]\:|\:G=\pi(G_{obs}),G\in [G_{obs}|Z_{obs}])\\
    &\hspace{-3cm}= P(\sum_{i=1}^{|C_{obs}|} W_i\cdot \mathcal{I}(j \in \pi({C_{obs}}_i)) = t\quad\forall j\in {Z_{obs}}^t, t\in \{0,1\}\:|\:G=\pi(G_{obs}),G\in [G_{obs}|Z_{obs}])\\
    &\hspace{-3cm}= P(\sum_{i=1}^{|C_{obs}|} W_i\cdot \mathcal{I}(\pi^{-1}(j) \in {C_{obs}}_i) = t\quad\forall j\in {Z_{obs}}^t, t\in \{0,1\}\:|\:G=\pi(G_{obs}),G\in [G_{obs}|Z_{obs}])\\
    &\hspace{-3cm}= P(\sum_{i=1}^{|C_{obs}|} W_i\cdot \mathcal{I}(j \in {C_{obs}}_i) = t\quad\forall \pi(j)\in {Z_{obs}}^t, t\in \{0,1\}\:|\:G=\pi(G_{obs}),G\in [G_{obs}|Z_{obs}])\\
    &\hspace{-3cm}= P(W=W_{obs}) = P(Z=Z_{obs}|G = g',G \in [G_{obs}|Z_{obs}]).
\end{aligned}
\end{equation}

Using Equation \eqref{eq cond 1' com} and \eqref{eq cond 2' com} in Equation \eqref{eq cond' com}, we obtain 
\begin{equation*}
    P(G = g|Z=Z_{obs},G \in [G_{obs}|Z_{obs}]) = P(G = g'|Z=Z_{obs},G \in [G_{obs}|Z_{obs}]).
\end{equation*}
Since $g$ and $g'$ are arbitrary, we conclude that the distribution of $G$ conditional on the event $\{Z=Z_{obs},G\in [G_{obs}|Z_{obs}]\}$ is uniform under cluster randomized experiment. Therefore,
\begin{equation*}
    p_{Z_{obs},G_{obs}} = \frac{1}{|[G_{obs}|Z_{obs}]|}\cdot\sum_{G \in [G_{obs}|Z_{obs}]}\mathcal{I}(T_c(G|Z_{obs})>T_c(G_{obs}|Z_{obs})).
\end{equation*}
Hence, we prove that Procedure \ref{procedure 3} is valid.

\end{proof}

\section{Additional numerical simulations}
\label{additional_sim}
We present the Erdős-Rényi random graph model and the data-generating process considered in the simulations here. To give an overview, we consider a linearly separable potential outcome model for the outcome function, composed of two effects: a direct treatment effect and a spillover treatment effect. The spillover parameter is weighted by a function of the unit's neighbor. We consider two such outcome models for the simulations: one that captures spillovers through the presence of a treated neighbor, and the other through the proportion of treated neighbors. 

\subsection{Erdős-Rényi random graph model}\label{ER model}
We formally describe the Erdős-Rényi random graph model (\citet{erdHos1960evolution}). The model described below is the probabilist variant of the (Erdős-Rényi) random graph model. 
%We consider the number of nodes, $N$, and edges, $|E|$, to be fixed, and any two nodes will be linked independently with an edge with a fixed probability $p$. 
We fix the number of nodes $N$, and each of the $\binom{N}{2}$ possible edges is present independently with probability $p$. The number of edges $|E|$ is therefore random, with $|E| \sim \mathrm{Binomial}\!\left(\binom{N}{2},\, p\right)$. That is, 
\begin{equation}
\label{eq:ER}
    %P(G = G_{obs}|p) = \binom{\binom{N}{2}}{|E|}\cdot p^{|E|}\cdot(1-p)^{\binom{N}{2}-|E|}.
    %P(G = G_{obs}|p) = p^{|E|}\cdot(1-p)^{\binom{N}{2}-|E|}.
    P(G = G_{\mathrm{obs}} \mid p) = p^{|E|} \, (1-p)^{\binom{N}{2} - |E|}.
\end{equation}
To model $P(G)$ using Equation (\ref{eq:ER}), we have to estimate the parameter $p$. We obtain the MLE of the parameter $p$ and estimate it as $\frac{|E|}{\binom{N}{2}}$.
% \begin{proof}
% \begin{equation*}
% \begin{aligned}
%     log(L(p | G = G_{obs})) &= log\left(\binom{\binom{N}{2})}{|E|}\right) + |E|\cdot log(p) + \left(\binom{N}{2}-|E|\right)\cdot log(1-p)\\
%     \frac{dlog(L(p | G = G_{obs}))}{dp} &= 0 \implies \frac{|E|}{p} - \frac{\left(\binom{N}{2}-|E|\right)}{(1-p)} = 0\\
%     \therefore \qquad \hat{p}_{\small{MLE}} = \frac{|E|}{\binom{N}{2}}
% \end{aligned}
% \end{equation*}   
% \end{proof}
We conduct an extensive numerical simulation study and verify the finite sample validity of the proposed randomization test. We also obtain the test's power under different scenarios. An Erdős-Rényi random graph model is the simplest network model one can consider, and it may not be representative of the kind of networks we observe in the real world. For example, one observes a high degree of clustering in the graph's social networks (\citet{watts1998collective}). One drawback of the model is the independence and identical probability of formation of two edges in the graph. One way to characterize this is to evaluate the degree distribution of the random graph. It has been shown that the degree distribution of a large  Erdős-Rényi random graph is a Poisson distribution. Most real-life networks violate this property and have highly skewed degree distributions. In the main text, we generalize this and characterize random graphs with an arbitrary degree distribution. 

\subsection{Data generating mechanism}
The following describes the two potential outcome models:
\begin{equation}
\label{outcome1}
    Y_{i}(z) = \tau_{direct}\cdot z_{i} + \tau_{spill}\cdot \mathcal{I}(\sum_{j=1}^{N}G_{ij}\cdot z_j >0) + \epsilon_{i},
\end{equation}
\begin{equation*}
    \epsilon_i \sim \mathcal{N}(0,1) \qquad\forall i \in [N].
\end{equation*}
Here, $\epsilon_i$ represents the baseline level of outcomes in the absence of treatment assignment. We take the mean of the normal distribution to be zero without loss of generality. We now re-state the second potential outcome model from Section \ref{simulation study}.
\begin{equation}
\label{outcome2}
Y_{i}(\mathbf{z}) =
\begin{cases}
    \tau_{\text{direct}} \cdot z_{i} + \tau_{\text{spill}} \cdot \frac{\sum_{j=1}^{N}G_{ij} \cdot z_{j}}{\sum_{j=1}^{N}G_{ij}} + \epsilon_{i} & \sum_{j=1}^{N}G_{ij} > 0,  \\
    \tau_{\text{direct}} \cdot z_{i} + \epsilon_{i} & \text{otherwise}.
\end{cases}
\end{equation}
The treatment assignment vector is chosen by randomly sampling $N_t$ units out of $N$ units and giving them treatment (that is, $Z_i = 1$ for all $i$ corresponding to units in the random sample). Here, $N_t$ is fixed and pre-determined by the experimenter. In the performed simulations, we have 100 total units and 5 treated units in the model in one case. We then extend the study to 500 units, and 250 were treated in the second case.

Consider the following setup:
\begin{enumerate}
    \item Network: We work with the standard Erdős-Rényi random graph model $G(n,p)$ where $n$ is the number of nodes and $p$ is the probability of edge formation between two nodes. In our setup, we take $p = 0.2$.
    \item Direct treatment effect: We take $\tau_{direct} = 4$ in our simulations. We also perform the simulations with $\tau_{direct} = 1$ for robustness. It is noted that $\tau_{direct}$ has no impact on the validity and power of the test.
    \item Spillover effect: We vary the spillover effect, $\tau_{spill}$ from 0 to 5, keeping $\tau_{direct}$ fixed.
    \item Test statistic: We consider the test statistic taking the difference of treatment effect on control units with control neighbors and control units with treated neighbors, $T_{I_c}(G|Z_{obs})$. We also consider the test statistic based on the difference in treatment effect of a low proportion of treated neighbors and a high proportion of treated neighbors, $T_{quant}(G|Z_{obs})$. Refer to Section \ref{test_statistic} for further details.
\end{enumerate}
We generate the randomization distribution of the test statistic under consideration by repeated drawing from the Erdős-Rényi random graph model with the probability of an edge between two nodes being 0.2. The simulations in Table \ref{table:1} were carried out on 100 units, with 5 units treated and the rest in the control group. We approximated the randomization distribution of the test statistic by sampling 200 graphs. We calculate the corresponding p-value by the proportion of times the test statistic imputed was larger than the test statistic for the observed network. If the p-value is less than 0.05, we reject the null hypothesis of no spillover effects. We replicated the above 450 times and calculated rejection rates as the proportion of times the p-value was significant. For this setup, the outcome model \eqref{outcome1} with $\tau_{direct}$ equals to 4 and $\tau_{spill}$ is presented in Table \ref{table:1}, varying from $0.0$ to $0.7$.

We then repeat the simulations for the above setup using the outcome model \eqref{outcome2} and present the results in Table \ref{table:2}. We note that the power of the test decreases substantially for model \eqref{outcome2}, compared with model \eqref{outcome1}. This is due to the simplicity of the outcome model \eqref{outcome1}, and we present its results for verification. Another caveat in the above setting is the nature of the test statistic $T_{I_c}(G|Z_{obs})$. We might not be able to control any control units with all their neighbors under control, depending on the input graph, invalidating the procedure. So, the above setup involves 100 units with 5 units in treatment. We find numerically that the procedure fails when the number of treated units exceeds 5. We present the results for model \eqref{outcome2} with $T_{quant}$ as test statistic in Table \ref{table:2}. Here, the population size is 500, with 250 treated units. $\tau_{direct}$ is 4 and varying $\tau_{spill}$ is used for power performance measure. 

In all the above models, we took the probability of an edge between two nodes as $0.2$. We present the same setup in Table \ref{table:2}, with the probability of an edge between two nodes estimated below. $G_{obs}$ is generated from Erdős-Rényi random graph with $p = 0.2$.
\begin{equation*}
    \hat{p} = \frac{|E_{G_{obs}}|}{\binom{N}{2}}.
\end{equation*}
Here, $|E_{G_{obs}}|$ represents number of edges in the observed graph $G_{obs}$. The results are presented in Table \ref{table:2}. We highlight that type 1 error is controlled at 0.05 across all setups (when $\tau_{spill}$ is zero). When the null hypothesis is false, we observe moderate power. This holds when the network distribution is estimated parametrically.
\newpage
\begin{table}[H]
 \caption{Rejection rates for the null hypothesis of no spillover effects. Outcome model \eqref{outcome1} with test statistic $T_{I_c}$.} 
 %\vspace{0.25cm}
  \label{table:1}
  \centering  
  \vspace{0.25cm}
  %\rotatebox{90}{
  \begin{tabular}{cc}    
    \hline
    $\tau_{spill}$\hspace{0.5cm} & $\text{Rejection Rate}\hspace{0.5cm}$ \\
    \hline
    0.0 & 0.038 \\[0.25cm]
    0.1 & 0.077 \\[0.25cm]
    0.2 & 0.167 \\[0.25cm]
    0.3 & 0.293 \\[0.25cm]
    0.4 &  0.444\\[0.25cm]
    0.5 &  0.620\\[0.25cm]
    0.6 &  0.749\\[0.25cm]
    0.7 & 0.858 \\
    %0.8 &  0.943\\[0.25cm]
    %0.9 &  0.987\\[0.25cm]
    %1.0 &  0.993\\[0.25cm]
    \hline
  \end{tabular}
  
\end{table}
\begin{table}[H]
\caption{Rejection rates for the null hypothesis of no spillover effects. Outcome model \eqref{outcome2} with test statistic $T_{quant}$, and $T_{quant}$ with estimated Erdős-Rényi random graph.} 
  \label{table:2}
  \centering
  \vspace{0.25cm}
  \begin{tabular}{ccc}
    \hline
    \multirow{2}{3em}{$\tau_{spill}$}& \multicolumn{2}{c}{$\text{Rejection Rate}$} \\
    \cline{2-3}
    & $T_{quant}$ & $T_{quant}$ with $\hat{p}$\\
    \hline
    0  & 0.057 & 0.044\\[0.25cm]
    1  & 0.124 &  0.138\\[0.25cm]
    2  & 0.402 & 0.469\\[0.25cm]
    3  & 0.747 & 0.729\\[0.25cm]
    4  & 0.947 & 0.927\\[0.25cm]
    5  & 0.984 & 0.989\\[0.25cm]
    \hline
  \end{tabular}
\end{table}

\section{Performance of test statistic $T_{bond}$ on Procedure \ref{procedure 2}}
\label{insights_bond}
\begin{figure}[H]
    \centering
    \small
    \includegraphics[width=0.8\textwidth]{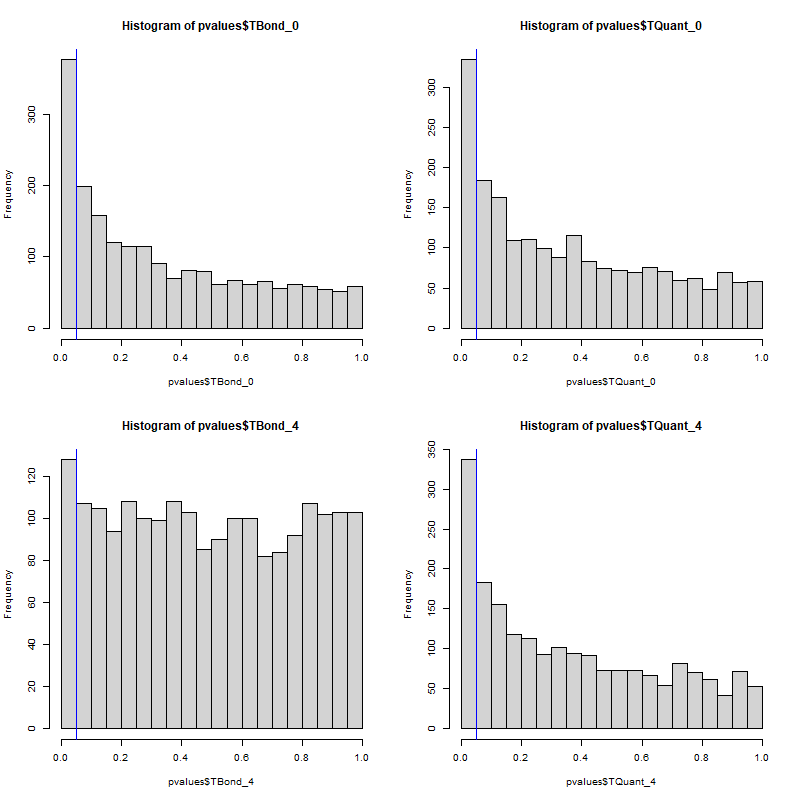}
    \caption{
    %Distribution of p-values for a completely randomized experiment under Procedure 2 (isomorphism permutation variant) over 2,000 replications; the plots on the top have direct effect 0 with test statistics $T_{bond}$ on the left panel and $T_{quant}$ on the right. The plots on the bottom have a direct effect 4. All plots have a spillover effect of 0.4. The (blue) line indicates the 0.05 threshold. [Lower panel, left plot] The closer-to-uniform distribution of the p-values for $T_{bond}$ contributes to the drop in power as observed in Table \ref{table: cre}.
    P-value distributions under Procedure \ref{procedure 2} over 2,000 replications with spillover effect 0.4. Top row: direct effect 0; bottom row: direct effect 4. Left column: $T_{bond}$; right column: $T_{quant}$. The blue line indicates the 0.05 threshold. The closer-to-uniform distribution of $T_{bond}$ (lower left) explains the power drop observed in Table \ref{table: cre}.}
    \label{fig:hist_pvalues}
\end{figure}
\end{appendices}

\end{document}